\newtheorem{theorem}{Theorem}[section]
\newtheorem{lemma}[theorem]{Lemma}
\theoremstyle{definition}
\newtheorem{problem}{Problem}
\newtheorem{corollary}{Corollary}
\newtheorem{definition}{Definition}
\newtheorem{proposition}{Proposition}
\newtheorem{example}{Example}
\theoremstyle{remark}
\newtheorem{remark}[theorem]{Remark}
\numberwithin{equation}{section}
\g@addto@macro{\endabstract}{\@setabstract}
\newcommand{\authorfootnotes}{\renewcommand\thefootnote{\@fnsymbol\c@footnote}}%
\begin{document}
\begin{center}
  \LARGE 
  On the Computational Complexity of the Secure State-Reconstruction Problem\footnote{This work was funded in part by the Army Research Laboratory under Cooperative Agreement W911NF-17-2-0196, by the UC-NL grant LFR-18-548554, by the NSF award 1740047, and by the NSF CAREER award 1653648.} \par \bigskip

  \normalsize
  \authorfootnotes
  Yanwen Mao\textsuperscript{1}, Aritra Mitra\textsuperscript{2},
  Shreyas Sundaram\textsuperscript{2} and Paulo Tabuada\textsuperscript{1} \par \bigskip

  \textsuperscript{1}Department of Electrical and Computer Engineering, \\ University of California, Los Angeles \\ Email: yanwen.mao@g.ucla.edu, tabuada@ucla.edu \par
  \textsuperscript{2}Department of Electrical Engineering, Purdue University, West Lafayette \\ Email: mitra14@purdue.edu, sundara2@purdue.edu\par \bigskip

  \today
\end{center}

\begin{abstract}
In this paper, we discuss the computational complexity of reconstructing the state of a linear system from sensor measurements that have been corrupted by an adversary. The first result establishes that the problem is, in general, NP-hard. We then introduce the notion of eigenvalue observability and show that the state can be reconstructed in polynomial time when each eigenvalue is observable by at least $2s+1$ sensors and at most $s$ sensors are corrupted by an adversary. However, there is a gap between eigenvalue observability and the possibility of reconstructing the state despite  attacks - this gap has been characterized in the literature by the notion of sparse observability. To better understand this, we show that when the $\mathbf{A}$ matrix of the linear system has unitary geometric multiplicity, the gap disappears, i.e., eigenvalue observability coincides with sparse observability, and there exists a polynomial time algorithm to reconstruct the state provided the state can be reconstructed.
\end{abstract}


\section{INTRODUCTION}
This paper is concerned with the detection of attacks on Cyber-Physical Systems (CPSs). The distributed nature of these large-scale systems often leads to increased vulnerabilities. Of particular concern are adversaries that exploit the distributed nature of CPSs to gain access to sensors and launch attacks by modifying their measurements~\cite{Cardenas,cpssecurity,Giraldo}. The most notorious example is the Stuxnet malware~\cite{5772960}, which attacked numerous industrial control systems.

Over the last decade, a significant amount of research has focused on reconstructing the state in the presence of sensor attacks - we will refer to this as the Secure State-Reconstruction (SSR) problem throughout the paper. The first experimental demonstration of a stealthy attack on a control system was reported in~\cite{Amin} and it was followed by the first theoretical results developed for special classes of systems~\cite{Sandberg,5717544}. Stealthy attacks were then formalized in~\cite{SSmith,7011176}. An important step in the conceptual understanding of these attacks was given in~\cite{PasqualettiC,fabio,5605238}, where the existence of such attacks was characterized by the system theoretic notion of zero-dynamics. 

In addition to detecting and identifying attacks, it is important to mitigate their effect by continuing to control the plant. Hence, researchers have invested a significant effort in developing algorithms to reconstruct the state since the papers~\cite{6120187,6727407}. However, the SSR problem is intrinsically an NP-hard problem (as we show in this paper). Based on how the NP-hardness is tackled, we classify the existing work in two classes: 1) brute force search~\cite{7171098,Anyang}, and 2) computationally efficient relaxations. The methods reported in the first class are better suited for small systems as the computational complexity grows combinatorially with the number of sensors. Noteworthy examples of the second class include: convex relaxations~\cite{6727407,Yong}, distributed detection filters~\cite{fabio}, specialized observers under sparsity constraints~\cite{7308014}, satisfiability modulo theory techniques~\cite{YasserSMT}, and safety envelopes~\cite{Tiwari}.

The distributed version of the SSR problem has also attracted a substantial amount of interest given the distributed nature of CPSs. Several authors have studied the problem of estimating a static vector from a set of corrupted measurements, either over a distributed sensor network \cite{chen1,sufinite}, or over a connected-on-average network~\cite{chen2}. A control-theoretic approach to distributed function calculation was developed in~\cite{5605238}. Follow-up works have analyzed the resilient consensus problem, both for discrete \cite{6481629}, and continuous-time \cite{6580005} systems. The work in \cite{Tseng} also evaluates this method in various network topologies. The problem of guaranteeing resilience in the context of distributed state estimation, when the state of the system evolves over time (based on potentially unstable dynamics) has been recently explored in \cite{deghat}, \cite{mitraCDC16}, and \cite{mitraAuto}. In particular, the authors in \cite{mitraAuto} develop a fully-distributed algorithm that reconstructs the evolving state despite attacks on certain sensors in the network.

Despite the wealth of literature on the security of CPSs, to the best of the authors’ knowledge, a detailed characterization of the complexity of the SSR problem is still lacking. On the one hand, the papers~\cite{6727407,Yong,fabio,7308014,YasserSMT,Tiwari} suggest that the SSR problem is computationally hard since they propose efficient relaxations to the problem. On the other hand, the paper~\cite{mitraAuto} implicitly proposes a polynomial-time solution to the SSR poblem for certain cases. These observations naturally call for a better understanding of the complexity of the SSR problem, which is precisely the goal of this paper. 

As we shall soon see, two alternate notions of observability, namely ``sparse observability" introduced in \cite{6727407,7308014} (see also~\cite{5605238} for an equivalent notion in continuous time), and ``eigenvalue observability" \cite{chen}, \cite{mitraTAC18}, will play key roles in our characterization of the SSR problem complexity. 
Our contributions are the following:
\begin{enumerate}
\item We show that the SSR problem is NP-hard.
\item We provide a decomposition that identifies portions of the state that can be reconstructed in polynomial time and portions that are NP-hard to reconstruct.
\item We offer a polynomial-time solution for the SSR problem under an eigenvalue observability assumption.
\item We show that checking sparse observability is coNP-complete.
\item We show that the notions of sparse observability and eigenvalue observability are equivalent when the geometric multiplicity of each eigenvalue of the system matrix $\mathbf{A}$ is 1.

\end{enumerate}

These results can be understood as follows. Although the SSR problem is NP-hard, in general, there may be portions of the state that can be reconstructed in polynomial time. We perform a system decomposition to identify these different portions of the state. 
In particular, when all the eigenvalues of the system matrix $\mathbf{A}$ have unitary geometric multiplicity, the decomposition results in scalar SSR problems. This establishes the equivalence between sparse observability, a necessary and sufficient condition for the SSR problem to be solvable, and eigenvalue observability, a sufficient condition for the existence of a polynomial time algorithm. Interestingly, even if the unitary geometric multiplicity condition is not satisfied, we may still check eigenvalue observability and, if successful, solve the SSR problem in polynomial time. When the system does not satisfy the eigenvalue observability condition, we conjecture that the SSR problem is intractable since even checking sparse observability is coNP-complete. This paper improves upon the preliminary results in~\cite{Yanwen} by introducing a decomposition technique that is key to the aforementioned contributions 1 and 2.

The rest of the paper is organized as follows. In Section \ref{sec:prelim}, we define the notation used throughout the paper. In Section \ref{sec:prob_form}, we introduce the system model and give a formal definition of the SSR problem, sparse observability, and eigenvalue observability. We prove that the SSR problem is NP-hard in Section \ref{sec: SSRhard}. This is then followed by a result on breaking the overall SSR problem into several smaller independent SSR problems. 
As a special case, we show in Section \ref{sec:Theorem1} that under an eigenvalue observability assumption, the SSR problem can be solved in polynomial time. While checking eigenvalue observability can be done in polynomial time, in Section \ref{sec:theorem2} we show that checking sparse observability is coNP-complete. We connect these two notions in Section \ref{sec:connection} by showing that they are equivalent when the geometric multiplicity of each eigenvalue of the system matrix $\mathbf{A}$ is $1$. Finally, we conclude the paper in Section \ref{sec:conclusion}.

\section{PRELIMINARIES AND NOTATIONS}
\label{sec:prelim}

The cardinality of a finite set $\mathcal{I}=\{\mathbf{i}_1,\dots,\mathbf{i}_p\}$ is denoted by $|\mathcal{I}|=p$. For matrices $\mathbf{Q}_{i_1},\dots,\mathbf{Q}_{i_p}$ over the same field and with the same number of columns,  we define the matrix $\mathbf{Q}_{\mathcal{I}}=\begin{bmatrix}\mathbf{Q}_{i_1}^T\vert \mathbf{Q}_{i_2}^T\vert \dots\vert \mathbf{Q}_{i_p}^T\end{bmatrix}^T$ by stacking the individual matrices vertically.

We use $\mathbb{R}$ to denote the field of real numbers, $\mathbb{Q}$ to denote the field of rational numbers, and $\mathbb{C}$ to denote the field of complex numbers. For a matrix $\mathbf{A}\in\mathbb{R}^{n\times n}$, we use $\mathrm{ker}~\mathbf{A}$ to denote the kernel of $\mathbf{A}$, $\mathrm{Im}(\mathbf{A})$ to denote the image of $\mathbf{A}$ and $\mathbf{A}\vert_V$ to denote the restriction of the linear map defined by $\mathbf{A}$ to the subspace $V$. We also denote by $\mathbf{A}(V)$ the set $\{y\in\mathbb{R}^n \vert y=\mathbf{A}x,x\in V\}$.  

Let $V$ be a vector space. The collection of vector spaces $\{V^j\}_{j=1,\hdots,r}$, with $V^j\subseteq V$, is said to be an internal direct sum of $V$, denoted by $V=\bigoplus_{j=1,\hdots,r}V^j$, if any vector $v\in V$ can be uniquely written as $v=v_1+\hdots+v_r$ with $v_j\in V^j$. The direct sum comes equipped with canonical inclusions $\imath_j:V^j\to V$ taking $v_j\in V^j$ to $\imath_j(v_j)=v_j\in V$,  and canonical projections $\pi_j:V\to V^j$ taking $v\in V$ to $\pi_j(v)=v_j\in V^j$.

As an example, consider $V=\mathbb{R}^4$ and let $V^1=\mathrm{Im}(\mathbf{M}_1)$, $V^2=\mathrm{Im}(\mathbf{M}_2)$, and $V^3=\mathrm{Im}(\mathbf{M}_3)$ where $\mathbf{M}_1$, $\mathbf{M}_2$, and $\mathbf{M}_3$ are the following linear transformations:
\begin{equation}
\label{matrix:M}
\mathbf{M}_1=\begin{bmatrix}
\phantom{-}2 & 0\\
-1 & 1\\
\phantom{-}1 & 1\\
\phantom{-}0 & 0
\end{bmatrix},\quad 
\mathbf{M}_2=\begin{bmatrix}
\phantom{-}0\\
\phantom{-}1\\
-1\\
\phantom{-}0
\end{bmatrix}, \quad 
\mathbf{M}_3=\begin{bmatrix}
-1 \\
\phantom{-}1\\
\phantom{-}0\\
\phantom{-}1
\end{bmatrix}. 
\end{equation}
The collection $\{V^1,V^2,V^3\}$ is an internal direct sum of $V$ since all the column vectors are linearly independent. The canonical inclusions $\imath_j$ can be represented by $\mathbf{I}_4\vert_{V^j}$, the identity matrix $\mathbf{I}_4$ of order 4 restricted to the subspace $V^j$, since $\imath_j$ maps any vector $v\in V^j$ to $v\in V$.
Conversely, the canonical projections $\pi_j$ are represented by the matrices $\mathbf{P}_j=\mathbf{M}_i\mathbf{U}_j\mathbf{M}^{-1}$, where $\mathbf{U}_1=\begin{bmatrix}1 & 0 & 0 & 0 \\ 0 & 1 & 0 & 0\end{bmatrix}$, $\mathbf{U}_2=\begin{bmatrix}0 & 0 & 1 & 0\end{bmatrix}$, $\mathbf{U}_3=\begin{bmatrix}0 & 0 & 0 & 1\end{bmatrix}$, as well as $\mathbf{M}=\begin{bmatrix}\mathbf{M}_1 & \mathbf{M}_2 & \mathbf{M}_3\end{bmatrix}.$

Let $V=\bigoplus_{j=1,\hdots,r}V^j$, $W=\bigoplus_{j=1,\hdots,r}W^j$, and consider a linear map  $F:V\to W$  satisfying $F(V^j)\subseteq W^j$. Then, the linear map $F^{(j)}:V^j\to W^j$ defined by $F^{(j)}=\pi_j\circ F\circ \imath_j$ satisfies: 
\begin{eqnarray}
\label{ComuteP}
F^{(j)}\circ \pi_j&=&\pi_j \circ F\\
\label{ComuteI}
\imath_j\circ F^{(j)}&=&F\circ\imath_j,
\end{eqnarray}
where $\circ$ denotes function composition.

Continuing with our example, let $\mathbf{F}$ be represented by the matrix: 
\begin{equation}
\label{matrix:F}
\mathbf{F}=\frac{1}{2}\begin{bmatrix}
\phantom{-}2 & \phantom{-}0 &\phantom{-}0 &-4\\
\phantom{-}1 & \phantom{-}3 & -1 & \phantom{-}4\\
-1 & -1 & \phantom{-}3 & \phantom{-}0\\
\phantom{-}0 & \phantom{-}0 &\phantom{-}0 &\phantom{-}6
\end{bmatrix},
\end{equation}
and note that $\mathbf{F}(V^j)\subseteq V^j$. The maps $\mathbf{F}^{(j)}$ are then given by $\mathbf{F}^{(1)}=\mathbf{P}_1\mathbf{F}\circ\imath_1=\mathbf{P}_1\mathbf{F}\vert_{V^1}=\mathbf{I}_4\vert_{V^1}, \mathbf{F}^{(2)}=\mathbf{P}_2\mathbf{F}\circ\imath_2=\mathbf{P}_2\mathbf{F}\vert_{V^2}=2\mathbf{I}_4\vert_{V^2}$, as well as $\mathbf{F}^{(3)}=\mathbf{P}_3\mathbf{F}\circ\imath_3=\mathbf{P}_3\mathbf{F}\vert_{V^3}=3\mathbf{I}_4\vert_{V^3}$. Since the vector subspaces $V^j$ are the generalized eigenspaces of $\mathbf{F}$ corresponding to each different eigenvalue, the matrices $\mathbf{F}^{(j)}$ are simply the identity matrix restricted to $V^j$ multiplied by the corresponding eigenvalue. 

We denote by $\lambda_1,\hdots,\lambda_r\in \mathbb{C}$ the (counted without repetition) eigenvalues of $\mathbf{A}$ and by $sp(\mathbf{A})=\{\lambda_1,\hdots,\lambda_r\}$ its spectrum. The algebraic multiplicity of an eigenvalue $\lambda_j$, denoted by $\alpha(\lambda_j)$, is the number of times (counted with repetition) that $\lambda_j$ is a solution of $\det (\mathbf{A}-\lambda_j\mathbf{I}_n)=0$. The geometric multiplicity of an eigenvalue $\lambda_j$, denoted by $\gamma(\lambda_j)$, is the dimension of the vector space $\ker (\mathbf{A}-\lambda_j \mathbf{I}_n)$. We denote the space of generalized eigenvectors associated with $\lambda_j$, $\ker (\mathbf{A}-\lambda_j \mathbf{I}_n)^{\alpha(\lambda_j)}$,  by $V_j$. Note that $V_j$ has dimension $\alpha(\lambda_j)$ and  $\gamma(\lambda_j)$ Jordan chains. 

Given a vector $\mathbf{b}\in \mathbb{R}^n$, we denote by $\Vert \mathbf{b}\Vert_0$ the number of non-zero entries in  $\mathbf{b}$.

\section{PROBLEM FORMULATION}
\label{sec:prob_form}
\subsection{System Model}
Consider a discrete-time linear time-invariant system under sensor attacks of the following form: 
\begin{eqnarray}
\mathbf{x}(k+1)&=&\mathbf{Ax}(k) \label{eqn:modelsys}\\
\mathbf{y}_{i}(k)&=&\mathbf{C}_i\mathbf{x}(k)+\mathbf{e}_i(k), \label{eqn:modelmeasure}
\end{eqnarray}
where $\mathbf{x}(k)\in\mathbb{R}^{n}$ and $\mathbf{y}_{i}(k) \in {\mathbb{R}}^{p_i}$ represent the state of the system and the measurement acquired by sensor $i$ respectively. The vector $\mathbf{e}_i(k)\in\mathbb{R}^{p_i}$ models the attack on sensor $i$. If sensor $i$ is attacked by an adversary, then $\mathbf{e}_i(k)$ can be arbitrary, otherwise, $\mathbf{e}_i(k)$ remains zero for any $k$. Let $\mathcal{V}$ denote the set of sensors, and let $N=|\mathcal{V}|$. We use $\mathbf{C}={\begin{bmatrix}\mathbf{C}^T_{1} \vert \mathbf{C}^T_{2} \vert \cdots \vert \mathbf{C}^T_{N}\end{bmatrix}}^T$ to denote the collection of the sensor observation matrices, $\mathbf{y}(k)={\begin{bmatrix}\mathbf{y}^T_{1}(k) & \cdots & \mathbf{y}^T_{N}(k)\end{bmatrix}}^T$ and $\mathbf{e}(k)={\begin{bmatrix}\mathbf{e}^T_{1}(k) & \cdots & \mathbf{e}^T_{N}(k)\end{bmatrix}}^T$ to represent the collective measurement vector and the collective attack vector, respectively.

We define $\mathcal{O}_i=\begin{bmatrix}\mathbf{C}_i^T \vert (\mathbf{C}_i\mathbf{A})^T \vert \dots \vert (\mathbf{C}_i\mathbf{A}^{\tau_i-1})^T \end{bmatrix}^T$ to be the observability matrix of sensor $i$ with $\tau_i$ being the observability index of the pair $(\mathbf{A},\mathbf{C}_i)$. We also define two more vectors $\mathbf{Y}_i=\begin{bmatrix}\mathbf{y}_i^T(0) & \dots & \mathbf{y}_i^T(\tau_i-1) \end{bmatrix}^T$ and  $\mathbf{E}_i=\begin{bmatrix}\mathbf{e}_i^T(0) & \dots & \mathbf{e}_i^T(\tau_i-1) \end{bmatrix}^T$ to be the collection of measurements and attacks of sensor $i$ over the time horizon $[0, \tau_i-1]$, respectively. An equivalent expression for the measurements is:
\begin{equation}
\label{modelMeasure}
    \mathbf{Y}_i=\mathcal{O}_i\mathbf{x}(0)+\mathbf{E}_i.
\end{equation}

In the remainder of the paper, we drop the time indices to simplify notation.

\subsection{The Secure State-Reconstruction Problem}

\begin{problem} (Secure state-reconstruction)\\ 
{\bf Input:} Matrices $\mathbf{A}\in\mathbb{R}^{n\times n}$, $\mathbf{C}_i \in \mathbb{R}^{p_i\times n},~i=1,\dots,N,$ and a set of vectors ${\mathbf{Y}_i\in \mathbb{R}^{p_i\tau_i}},~i=1,\dots,N.$\\
{\bf Question:} Find a vector $\mathbf{x}\in \mathbb{R}^n$ and a set $\mathcal{I}$ of minimal cardinality such that $\mathbf{Y}_j=\mathcal{O}_j\mathbf{x}$ for all $j\notin \mathcal{I}$.
\end{problem}
In other words, the SSR problem requires the reconstruction of a state $\mathbf{x}$ and the simplest attack explanation in the form of the least number of attacked sensors. Note that when the solution $\mathbf{x}$ is unique, we have found the state of the linear system. Although uniqueness of solutions is essential when handling attacks, we can study the complexity of the SSR problem independently of the number of solutions. To make this clear, we will explicitly state the uniqueness requirements when needed.

\subsection{Sparse Observability and Eigenvalue Observability}
\label{sec:sparse_obs}
The notions of sparse observability and eigenvalue observability are instrumental to the results in this paper.  

\begin{definition}[Sparse observability index]
The sparse observability index of the pair $(\mathbf{A},\mathbf{C})$ in system~\eqref{eqn:modelsys}-\eqref{eqn:modelmeasure} is the largest integer $k$ such that $\mathrm{ker}~\mathcal{O}_{\mathcal{V}\backslash \mathcal{K}}=\{0\}$ for any $\mathcal{K}\subseteq \mathcal{V},~|\mathcal{K}|\leq k$. When the sparse observability index is $r$, we say that system~\eqref{eqn:modelsys}-\eqref{eqn:modelmeasure} is $r-$sparse observable.
\label{def:sparse_obs_index}
\end{definition}

It is proved in \cite{6727407,7308014} (see also~\cite{7171098} for a similar notion in continuous time) that the possibility of uniquely reconstructing the state $\mathbf{x}(k)$ is characterized by the sparse observability index.

\begin{theorem}[\cite{6727407,7171098,7308014}]
Consider the linear system \eqref{eqn:modelsys}-\eqref{eqn:modelmeasure} where at most $s$ sensors are subject to attacks. The state $\mathbf{x}(k)$ can be uniquely reconstructed if and only if the sparse observability index of the pair $(\mathbf{A},\mathbf{C})$ is at least $2s$.
\end{theorem}

In view of this result, computing the sparse observability index of a system is of great interest since it characterizes the maximum number of arbitrary sensor attacks that can be tolerated without compromising the ability to uniquely reconstruct the state. 

In addition to sparse observability, we will require the notion of eigenvalue observability \cite{chen,mitraTAC18}.

\begin{definition}[Eigenvalue observability index]
We say that an eigenvalue $\lambda \in sp(\mathbf{A})$ is observable w.r.t. sensor $i$ if the linear map defined by
$\begin{bmatrix}\mathbf{A}-\lambda\mathbf{I}_n\\ \mathbf{C}_i\end{bmatrix}$ is injective. 

If the above condition is satisfied, we say that ``sensor $i$ can observe the states in the generalized eigenspace corresponding to $\lambda$", or briefly, we say ``sensor $i$ can observe eigenvalue $\lambda$". Let the set of all sensors that can observe an eigenvalue $\lambda$ be denoted $\mathcal{S}_{\lambda}$. The eigenvalue observability index of system~\eqref{eqn:modelsys}-\eqref{eqn:modelmeasure} is the largest integer $k$ such that each eigenvalue of the matrix $\mathbf{A}$ is observable by at least $k + 1$ distinct sensors. When the eigenvalue observability index is $k$, we say that system~\eqref{eqn:modelsys}-\eqref{eqn:modelmeasure} is $k$-eigenvalue observable. 
\end{definition}

We study the SSR problem under the following assumptions.

\textbf{Assumption 1:} For each sensor $i \in \{1, \ldots, N\}$ under attack, the adversary can only manipulate sensor $i$'s measurements through the signal $\mathbf{e}_i(k)$ in \eqref{eqn:modelmeasure}.

\textbf{Assumption 2:} The adversary is omniscient, i.e., we assume the adversary has full knowledge of the system state, measurements, and plant model. Moreover, all the attacked sensors are allowed to work cooperatively.

\section{SSR IS HARD}
\label{sec: SSRhard}
Fawzi \emph{et al.} established in~\cite{6727407} a connection between the SSR problem and compressed sensing by drawing inspiration from the ideas of Candes and Tao in~\cite{1542412}. We take this approach further by also using the ideas in~\cite{1542412} to establish that the SSR problem is NP-hard. To do so, we first define the compressed sensing problem. 

\begin{problem} (Compressed sensing)\\ 
{\bf Input:} A full row rank matrix $\mathbf{F}\in \mathbb{Q}^{m\times n}$, a vector $\mathbf{b}\in \mathbb{Q}^m$.\\
{\bf Question:} Find the sparsest solution of $\mathbf{Fx}=\mathbf{b}$.
\end{problem}

The compressed sensing problem yields the solution to the minimization problem:
\begin{eqnarray}
\min_{\mathbf{x}} & \left\lVert \mathbf{x} \right\rVert_0 \\ \notag
\textrm{s.t.} & \mathbf{Fx}=\mathbf{b}. \\ \notag
\end{eqnarray}

\begin{theorem}[\cite{6727407}]
The SSR problem is NP-hard.
\end{theorem}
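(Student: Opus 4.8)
The plan is to reduce the compressed sensing problem --- known to be NP-hard (Candes--Tao \cite{1542412}) --- to the SSR problem. Given an instance of compressed sensing, i.e.\ a full row rank matrix $\mathbf{F}\in\mathbb{Q}^{m\times n}$ and a vector $\mathbf{b}\in\mathbb{Q}^m$, the goal is to construct in polynomial time an instance of the SSR problem whose solution recovers the sparsest $\mathbf{x}$ with $\mathbf{Fx}=\mathbf{b}$. The natural correspondence is: the sensor-indexed measurement equations $\mathbf{Y}_j=\mathcal{O}_j\mathbf{x}$ play the role of the individual rows (or row-blocks) of $\mathbf{Fx}=\mathbf{b}$, and the set $\mathcal{I}$ of ``attacked'' sensors of minimal cardinality corresponds exactly to the set of rows where the candidate $\mathbf{x}$ fails to satisfy $\mathbf{Fx}=\mathbf{b}$, i.e.\ to the support of the residual $\mathbf{b}-\mathbf{Fx}$.

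Concretely, I would take $\mathbf{A}=\mathbf{I}_n$ (so that $\mathbf{A}$ trivially has integer/rational entries and the observability matrix of each sensor collapses: with $\mathbf{A}=\mathbf{I}_n$ one checks $\mathcal{O}_i$ reduces to $\mathbf{C}_i$ up to stacking identical blocks, so effectively $\tau_i=1$ and $\mathcal{O}_i=\mathbf{C}_i$). Then assign one sensor per row of $\mathbf{F}$: let $N=m$, let $\mathbf{C}_i\in\mathbb{Q}^{1\times n}$ be the $i$-th row of $\mathbf{F}$, and let $\mathbf{Y}_i=b_i$ be the $i$-th entry of $\mathbf{b}$. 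A vector $\mathbf{x}\in\mathbb{R}^n$ together with a minimal set $\mathcal{I}$ such that $\mathbf{Y}_j=\mathcal{O}_j\mathbf{x}=\mathbf{C}_j\mathbf{x}$ for all $j\notin\mathcal{I}$ is then exactly a vector $\mathbf{x}$ minimizing $\|\mathbf{b}-\mathbf{Fx}\|_0$. This is the sparse-recovery / nearest-codeword problem, which is the form of NP-hardness actually established in \cite{1542412}; a short argument (or citation) shows minimizing $\|\mathbf{b}-\mathbf{Fx}\|_0$ over all $\mathbf{x}$ is NP-hard. Alternatively, if one wants the reduction to land on the exact-recovery statement ``find sparsest $\mathbf{x}$ with $\mathbf{Fx}=\mathbf{b}$'', I would instead identify the \emph{state} with the slack: put the unknown in the $\mathbf{e}_i$ variables and encode $\mathbf{Fx}=\mathbf{b}$ so that the attack vector $\mathbf{E}$ is forced to equal (a padded version of) the compressed-sensing unknown, with the number of attacked sensors equal to $\|\mathbf{x}\|_0$; either framing works, and I would pick whichever matches the precise NP-hard problem cited.

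The key steps, in order, are: (i) state precisely which NP-hard problem from \cite{1542412} is being reduced from (sparsest solution of a rational linear system, equivalently the MIN-UNSATISFY / nearest-codeword problem); (ii) describe the polynomial-time construction of the SSR instance $(\mathbf{A},\{\mathbf{C}_i\},\{\mathbf{Y}_i\})$ from $(\mathbf{F},\mathbf{b})$, noting that all entries remain rational and the size blows up only polynomially, and verifying that $\mathcal{O}_i=\mathbf{C}_i$ under the chosen $\mathbf{A}$; (iii) prove the correspondence in both directions --- a solution $(\mathbf{x},\mathcal{I})$ of the SSR instance yields a solution of the compressed sensing instance of the same cost, and conversely --- so that minimal $|\mathcal{I}|$ matches minimal sparsity; (iv) conclude that a polynomial-time algorithm for SSR would solve compressed sensing in polynomial time, hence SSR is NP-hard.

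The main obstacle is step (iii) together with getting the reduction ``polarity'' right: one must ensure the minimal-cardinality set $\mathcal{I}$ in the SSR formulation corresponds to a \emph{sparsest} object in the compressed sensing formulation, not merely to \emph{some} feasible object, and that nothing is lost when passing from the per-sensor block structure ($\mathbf{Y}_i\in\mathbb{R}^{p_i\tau_i}$, with a whole block either ``attacked'' or not) to the per-coordinate sparsity count $\|\cdot\|_0$ in compressed sensing. Using $p_i=1$ and $\mathbf{A}=\mathbf{I}_n$ (so $\tau_i=1$) makes each sensor contribute exactly one scalar equation, which aligns block-sparsity with coordinate-sparsity and removes this mismatch; I would also remark that full row rank of $\mathbf{F}$ is preserved or is irrelevant to the hardness. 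A secondary technical point is confirming that the observability index $\tau_i$ is well-defined and equals $1$ in the construction, which is immediate since $\mathbf{C}_i\mathbf{A}^k=\mathbf{C}_i$ for all $k$ when $\mathbf{A}=\mathbf{I}_n$, so $\mathrm{rank}\,\mathcal{O}_i=\mathrm{rank}\,\mathbf{C}_i$ is already achieved at $\tau_i=1$.
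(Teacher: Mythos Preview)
Your proposal is correct, but your primary reduction differs from the paper's. You set $\mathbf{C}_i$ equal to the $i$-th row of $\mathbf{F}$ and $\mathbf{Y}_i=b_i$, so that the SSR solution minimizes $\|\mathbf{b}-\mathbf{F}\mathbf{x}\|_0$; this is a reduction from the nearest-codeword / MIN-UNSATISFY problem, and it is indeed NP-hard. The paper instead reduces from the exact compressed-sensing formulation $\min\|\mathbf{e}\|_0$ subject to $\mathbf{F}\mathbf{e}=\mathbf{b}$: it chooses $\mathbf{C}$ so that $\mathrm{Im}\,\mathbf{C}=\ker\mathbf{F}$, takes $\mathbf{Y}$ to be any particular solution of $\mathbf{F}\mathbf{Y}=\mathbf{b}$, and then shows that the attack vector $\mathbf{e}$ in $\mathbf{C}\mathbf{x}+\mathbf{e}=\mathbf{Y}$ ranges exactly over the solutions of $\mathbf{F}\mathbf{e}=\mathbf{b}$, with the SSR state $\mathbf{x}$ absorbing the $\ker\mathbf{F}$ freedom. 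This is essentially the ``alternative'' you sketch in passing (identifying the compressed-sensing unknown with the attack vector rather than with the state). Your primary route is arguably more elementary---no need to compute a basis for $\ker\mathbf{F}$ or a particular solution, and the cost correspondence $|\mathcal{I}|=\|\mathbf{b}-\mathbf{F}\mathbf{x}\|_0$ is immediate---while the paper's route has the advantage of landing directly on the textbook compressed-sensing problem as stated.

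One caution: you attribute the NP-hardness to Cand\`es--Tao \cite{1542412}, but that paper does not establish hardness; the paper here cites Natarajan \cite{natarajan} for the compressed-sensing version, and for your nearest-codeword formulation you would want to cite the relevant coding-theory or MAX-FS literature instead.
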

\begin{proof}
Given an instance of the compressed sensing problem, we generate an instance of the SSR problem as follows. Let the system matrix be of the form $\mathbf{A}=\mathbf{I}_n$, and the collective observation matrix $\mathbf{C}$ satisfy $\mathrm{Im}\mathbf{C}=\mathrm{ker}~\mathbf{F}$. Let the measurements of the sensors be scalar-valued, i.e., let $\mathbf{C}_i$ be the $i$-th row of $\mathbf{C}$. Note that based on the above $\mathbf{A}$ matrix, the observability index for each sensor $i \in \{1, \ldots, N\}$ is given by $\tau_i = 1$, and thus $\mathcal{O}_i = \mathbf{C}_i$.  Finally, let $\mathbf{Y}$ be any solution to the equation $\mathbf{F}\mathbf{Y}=\mathbf{b}$. Since the linear equation $\mathbf{F}{\mathbf{Y}}=\mathbf{b}$ is underdetermined, finding a solution ${\mathbf{Y}}$ can be done in polynomial time~\cite{Alan}. For each $i \in \{1, \ldots, N\}$, set $\mathbf{Y}_i$ to be the $i$-th row of $\mathbf{Y}$. Thus, given an instance of the compressed sensing problem, the instance of the SSR problem described above can be constructed in polynomial time. 

 The SSR problem for the constructed instance degenerates to:
\begin{eqnarray}
\label{Opti}
\min_{\mathbf{x,e}} & \left\lVert \mathbf{e} \right\rVert_0 \\ \notag
\textrm{s.t.} & \mathbf{Cx}+\mathbf{e}=\mathbf{Y}. \\ \notag
\end{eqnarray}

We now show these two problems have the same solution.  It is simple to see that  any solution $(\mathbf{x,e})$ of $\mathbf{Cx}+\mathbf{e}=\mathbf{Y}$ provides a solution to $\mathbf{Fe}=\mathbf{b}$, since by applying $\mathbf{F}$ we obtain:
\begin{eqnarray} 
\mathbf{F}(\mathbf{Cx}+\mathbf{e})&=&\mathbf{F}{\mathbf{Y}} \\ \notag
\Leftrightarrow \mathbf{Fe}&=&\mathbf{b}.
\end{eqnarray}
To prove the converse, we show that for every $\mathbf{e}$ such that $\mathbf{Fe}=\mathbf{b}$, there exists some $\mathbf{x}$ satisfying $\mathbf{Cx}+\mathbf{e}=\mathbf{Y}$. Recalling that $\mathbf{F}{\mathbf{Y}}=\mathbf{b}$, we obtain $\mathbf{F}(\mathbf{Y}-\mathbf{e})=\mathbf{0}$, i.e., ${\mathbf{Y}}-\mathbf{e}\in \mathrm{ker}~\mathbf{F}$. Since $\mathrm{ker}~\mathbf{F}=\mathrm{Im} \mathbf{C}$, there exists an $\mathbf{x}$ such that $\mathbf{Cx}=\mathbf{Y}-\mathbf{e}$, as desired.

Noticing that the equations $\mathbf{Fe}=\mathbf{b}$ and $\mathbf{Cx}+\mathbf{e}=\mathbf{Y}$ have the same solutions for $\mathbf{e}$, we conclude that  they also have the same sparsest solution. In other words, if there exists an algorithm $\mathcal{A}$ that solves the SSR problem for the specific instance constructed by us, such an algorithm will also yield a solution to the given instance of the compressed sensing problem. It then follows that since the compressed sensing problem is NP-hard~\cite{natarajan}, the secure state reconstruction problem is also NP-hard.
\end{proof}

\section{SYSTEM DECOMPOSITION}
\label{sec:decompose}
In the previous section, we proved that the SSR problem is in general NP-hard. This means there does not exist a polynomial-time solution unless $P=NP$. Despite this fact, we show in this section how to decompose the SSR problem into smaller instances. In the next section, we identify which of these smaller instances are NP-hard, and which ones are solvable in polynomial time.

\begin{lemma}
	\label{TechnicalLemma}
	Assume the existence of a collection of vector spaces $\{X^j\}_{j=1,\hdots,r}$ satisfying:
	\begin{enumerate}
		\item $\mathbb{C}^n=\bigoplus_{j=1,\hdots,r}X^j$;
		\item $\mathbf{A}(X^j)\subseteq X^j$ for $j=1,\hdots,r$;
		\item $\mathcal{O}_i(\mathbb{C}^n)=\bigoplus_{j=1,\hdots,r}\mathcal{O}_i^j(X^j)\quad \text{for~} i=1,\hdots,p,$
	\end{enumerate}
	then for any $\mathbf{Y}_i$, a solution $\mathbf{x}$ of the equation:
	\begin{equation}
	\label{SingleEq}
	\mathbf{Y}_i=\mathcal{O}_i\mathbf{x},
	\end{equation}
	whenever it exists, can be written as $\mathbf{x}=\mathbf{x}_1+\mathbf{x}_2+\hdots+\mathbf{x}_r$ with $\mathbf{x}_j=\pi_j(\mathbf{x})\in X^j$ given by the solution of:
	\begin{equation}
	\label{MultipleEq}
	\mathbf{Y}_i^j=\mathcal{O}_i^j\mathbf{x}_j,
	\end{equation}
	for $\mathbf{Y}_i^j=\pi_j(\mathbf{Y}_i)\in \mathcal{O}_i^j(X^j)$ and $\mathcal{O}_i^j=\pi_j\circ\mathcal{O}_i\circ \imath_j$.
\end{lemma}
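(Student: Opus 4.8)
The plan is to exploit the compatibility relations \eqref{ComuteP} and \eqref{ComuteI} established in the preliminaries. The hypotheses give us two direct sum decompositions: $\mathbb{C}^n=\bigoplus_j X^j$ on the domain side, and (by hypothesis 3) $\mathcal{O}_i(\mathbb{C}^n)=\bigoplus_j \mathcal{O}_i^j(X^j)$ on the codomain side. The crucial preliminary observation is that $\mathcal{O}_i$ maps $X^j$ into $\mathcal{O}_i^j(X^j)$; this follows because $\mathcal{O}_i$ is built from powers of $\mathbf{A}$, each of which leaves $X^j$ invariant by hypothesis 2, so $\mathcal{O}_i(X^j)\subseteq \mathcal{O}_i(\mathbb{C}^n)$, and combined with the direct sum structure of the codomain one identifies $\mathcal{O}_i(X^j)$ with the $j$-th summand. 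Thus $\mathcal{O}_i$ is a block-diagonal map with respect to the two decompositions, and the abstract setup preceding \eqref{ComuteP}--\eqref{ComuteI} applies with $V=\mathbb{C}^n$, $W=\mathcal{O}_i(\mathbb{C}^n)$, $V^j=X^j$, $W^j=\mathcal{O}_i^j(X^j)$, $F=\mathcal{O}_i$, and $F^{(j)}=\mathcal{O}_i^j$.

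With this in place the argument is a short chase. Suppose $\mathbf{x}$ solves $\mathbf{Y}_i=\mathcal{O}_i\mathbf{x}$. Write $\mathbf{x}=\sum_j \mathbf{x}_j$ with $\mathbf{x}_j=\pi_j(\mathbf{x})\in X^j$, the unique decomposition afforded by hypothesis 1. Applying $\pi_j$ to both sides of $\mathbf{Y}_i=\mathcal{O}_i\mathbf{x}$ and using \eqref{ComuteP} in the form $\pi_j\circ\mathcal{O}_i=\mathcal{O}_i^j\circ\pi_j$ gives $\pi_j(\mathbf{Y}_i)=\mathcal{O}_i^j(\pi_j(\mathbf{x}))$, i.e. $\mathbf{Y}_i^j=\mathcal{O}_i^j\mathbf{x}_j$, which is exactly \eqref{MultipleEq}. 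Conversely, if $\mathbf{x}_j\in X^j$ solves $\mathbf{Y}_i^j=\mathcal{O}_i^j\mathbf{x}_j$ for each $j$, then applying $\imath_j$ and using \eqref{ComuteI} in the form $\imath_j\circ\mathcal{O}_i^j=\mathcal{O}_i\circ\imath_j$ yields $\imath_j(\mathbf{Y}_i^j)=\mathcal{O}_i(\mathbf{x}_j)$; summing over $j$ and noting $\sum_j \imath_j(\mathbf{Y}_i^j)=\mathbf{Y}_i$ (since $\mathbf{Y}_i\in\mathcal{O}_i(\mathbb{C}^n)=\bigoplus_j\mathcal{O}_i^j(X^j)$ and $\mathbf{Y}_i^j$ are its components) gives $\mathbf{Y}_i=\mathcal{O}_i\big(\sum_j\mathbf{x}_j\big)$. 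This establishes the bijective correspondence between solutions of \eqref{SingleEq} and tuples of solutions of \eqref{MultipleEq}.

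The one point that needs care — and which I expect to be the main obstacle — is justifying that $\mathbf{Y}_i^j:=\pi_j(\mathbf{Y}_i)$ genuinely lands in $\mathcal{O}_i^j(X^j)$ and that the well-definedness of $\pi_j$ on the codomain is not circular: the projection $\pi_j$ on the $W$-side is defined relative to the decomposition $W=\bigoplus_j \mathcal{O}_i^j(X^j)$, which itself is asserted by hypothesis 3, so one must check that $\mathcal{O}_i^j(X^j)=\pi_j(\mathcal{O}_i(\mathbb{C}^n))$ is consistent, i.e. that the two a priori different descriptions of the $j$-th codomain summand agree. This is where $\mathcal{O}_i(X^j)\subseteq\mathcal{O}_i^j(X^j)$ together with hypothesis 3 forces equality $\mathcal{O}_i(X^j)=\mathcal{O}_i^j(X^j)$ by a dimension/direct-sum count, closing the loop. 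Everything else is the routine diagram chase above, so I would state that consistency check carefully and then present the forward and backward implications as two short paragraphs.
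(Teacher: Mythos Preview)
Your proposal is correct and follows essentially the same route as the paper: apply $\pi_j$ and invoke \eqref{ComuteP} to pass from \eqref{SingleEq} to \eqref{MultipleEq}, then apply $\imath_j$, invoke \eqref{ComuteI}, and sum over $j$ for the converse. The paper presents these two directions in the opposite order and simply applies the commutation identities without pausing over the consistency issue you flag (that $\mathcal{O}_i(X^j)$ must coincide with the $j$-th summand $\mathcal{O}_i^j(X^j)$ so that \eqref{ComuteP}--\eqref{ComuteI} are available); your explicit treatment of this point is, if anything, more careful than the original.
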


\begin{proof}
	Let $\mathbf{x}_j$ be the solution of~\eqref{MultipleEq} and note that:
	\begin{equation}
	\mathbf{Y}_i^j=\mathcal{O}_i^j \mathbf{x}_j \Rightarrow \imath_j(\mathbf{Y}_i^j)=\imath_j\circ \mathcal{O}_i^j (\mathbf{x}_j)=\mathcal{O}_i\circ \imath_j(\mathbf{x}_j)=\mathcal{O}_i\mathbf{x}_j,  
	\end{equation}
	where the third equality follows from~\eqref{ComuteI}. By summing over $j$ we obtain:
	\begin{equation}
	\mathbf{Y}_i=\sum_{j=1}^{r}\imath_j(\mathbf{Y}_i^j)=\sum_{j=1}^{r}\mathcal{O}_i\mathbf{x}_j=\mathcal{O}_i\sum_{j=1}^{r}\mathbf{x}_j=\mathcal{O}_i \mathbf{x}.
	\end{equation}
	Hence, the solutions to~\eqref{MultipleEq} provide a solution to~\eqref{SingleEq}. Consider now~\eqref{SingleEq}:
	\begin{equation}
	\begin{aligned}
	\mathbf{Y}_i= & \mathcal{O}_i\mathbf{x}\Rightarrow  \pi_j(\mathbf{Y}_i)=\pi_j\circ \mathcal{O}_i(\mathbf{x}) \\
	\Rightarrow & \mathbf{Y}_i^j=\mathcal{O}_i^j\circ \pi_j(\mathbf{x})=\mathcal{O}_i^j \mathbf{x}_j.
	\end{aligned}
	\end{equation}
	where the third equality follows from~\eqref{ComuteP}. Hence, if $\mathbf{x}$ is a solution to~\eqref{SingleEq}, then $\mathbf{x}_i$ is a solution to~\eqref{MultipleEq}.
\end{proof}
Intuitively, we treat the state-space $\mathbb{R}^n$ as the direct sum of multiple subspaces. If the images of these subspaces under the linear map $\mathcal{O}_i$ are pairwise non-overlapping, we are able to project the state vector $\mathbf{x}$ onto these subspaces, project the measurement $\mathbf{Y}_i$ onto the image under the linear map $\mathcal{O}_i$ of these subspaces, and then establish a one-to-one correspondence between the projected state vector and the projected measurement. This effectively decomposes the original problem into $r$ sub-problems, each of dimension $\mathrm{dim}(X^j)$. As formalized in the next result, the spaces $X^j$ can always be taken to be the generalized eigenspaces of $\mathbf{A}$.

\begin{proposition}
	\label{TechnicalProposition}
	The generalized eigenspaces $V^1,V^2,\dots,V^r$ of $\mathbf{A}$ satisfy properties (1)-(3) in Lemma~\ref{TechnicalLemma}.
\end{proposition}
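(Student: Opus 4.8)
The plan is to verify the three properties of Lemma~\ref{TechnicalLemma} one at a time, working over $\mathbb{C}^n$ with $X^j = V^j$ the generalized eigenspace of $\mathbf{A}$ associated with the eigenvalue $\lambda_j$. Properties (1) and (2) are standard linear algebra: the primary decomposition theorem gives $\mathbb{C}^n = \bigoplus_{j=1}^r V^j$ since the minimal polynomial of $\mathbf{A}$ factors into distinct linear factors raised to powers, and each $V^j = \ker(\mathbf{A}-\lambda_j\mathbf{I}_n)^{\alpha(\lambda_j)}$ is $\mathbf{A}$-invariant because $\mathbf{A}$ commutes with any polynomial in $\mathbf{A}$. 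So the bulk of the work is property (3), namely
\begin{equation*}
\mathcal{O}_i(\mathbb{C}^n) = \bigoplus_{j=1}^r \mathcal{O}_i^j(V^j), \qquad \mathcal{O}_i^j = \pi_j \circ \mathcal{O}_i \circ \imath_j.
\end{equation*}

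For property (3) I would first check that the right-hand side is indeed an \emph{internal} direct sum, i.e. that the subspaces $\mathcal{O}_i^j(V^j) \subseteq \mathcal{O}_i(\mathbb{C}^n)$ are independent. Here the key observation is that $\mathcal{O}_i$ intertwines $\mathbf{A}$ with the shift-like operator on the stacked output space: more precisely, since each block of $\mathcal{O}_i$ is $\mathbf{C}_i\mathbf{A}^k$, one has a natural $\mathbf{A}$-action on $\mathrm{Im}(\mathcal{O}_i)$ (induced from the domain) under which $\mathcal{O}_i$ is equivariant, and $\mathcal{O}_i(V^j)$ lands in the generalized $\lambda_j$-eigenspace of this induced action. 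Because the $\lambda_j$ are distinct, these generalized eigenspaces are independent, so the sum $\sum_j \mathcal{O}_i(V^j)$ is direct and moreover the projection of $\mathrm{Im}(\mathcal{O}_i)$ onto its $j$-th summand agrees with $\pi_j$ restricted to $\mathrm{Im}(\mathcal{O}_i)$; this is what lets us identify $\pi_j(\mathcal{O}_i \imath_j(v)) = \mathcal{O}_i^j(v)$ with the honest summand $\mathcal{O}_i(V^j)$ and gives the inclusion $\bigoplus_j \mathcal{O}_i^j(V^j) \subseteq \mathcal{O}_i(\mathbb{C}^n)$. For the reverse inclusion, take any $w = \mathcal{O}_i x$ with $x \in \mathbb{C}^n$, decompose $x = \sum_j x_j$ with $x_j \in V^j$ by property (1), and write $w = \sum_j \mathcal{O}_i x_j = \sum_j \mathcal{O}_i \imath_j(x_j)$; projecting termwise shows $w \in \bigoplus_j \mathcal{O}_i^j(V^j)$.

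The step I expect to be the main obstacle — or at least the one requiring the most care — is showing $\pi_j(\mathrm{Im}\,\mathcal{O}_i) \subseteq \mathrm{Im}\,\mathcal{O}_i$, equivalently that the ambient projections $\pi_j: \mathbb{C}^n \to V^j$ (the generalized eigenprojections of $\mathbf{A}$, which are polynomials in $\mathbf{A}$) preserve $\mathrm{Im}\,\mathcal{O}_i$. This is where one must exploit that $\mathcal{O}_i$ is built from $\mathbf{C}_i\mathbf{A}^k$ rather than being an arbitrary linear map: since $\pi_j = q_j(\mathbf{A})$ for some polynomial $q_j$, and $\mathbf{C}_i\mathbf{A}^k q_j(\mathbf{A}) = \mathbf{C}_i q_j(\mathbf{A})\mathbf{A}^k$, one can push $\pi_j$ through each block of $\mathcal{O}_i$, expressing $\pi_j \mathcal{O}_i x$ as $\mathcal{O}_i (q_j(\mathbf{A}) x)$ only after accounting for the fact that $q_j(\mathbf{A})$ acts on the output side, not the input side — the clean way to phrase this is via the induced $\mathbf{A}$-module structure on $\mathrm{Im}\,\mathcal{O}_i$ mentioned above, with $\mathcal{O}_i$ an $\mathbf{A}$-module map, so that equivariance makes $\pi_j$ on the target correspond to $\pi_j$ on the source. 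Once that module-theoretic bookkeeping is set up correctly, property (3) follows by combining it with the distinctness of the $\lambda_j$, and the proposition is complete.
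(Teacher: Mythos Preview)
Your approach is correct but takes a genuinely different route from the paper. The paper handles property (3) by a direct elementary computation: it reduces to showing that $\mathcal{O}(\mathbf{x}_j+\mathbf{x}_k)=0$ with $\mathbf{x}_j\in V^j$, $\mathbf{x}_k\in V^k$ forces $\mathcal{O}\mathbf{x}_j=\mathcal{O}\mathbf{x}_k=0$, and does so by passing from $\mathcal{O}$ to $\mathcal{O}(\mathbf{A}-\lambda_k\mathbf{I}_n)^{\alpha(\lambda_k)}$ (legitimate because $\ker\mathcal{O}\subseteq\ker\mathcal{O}(\mathbf{A}-\lambda_k\mathbf{I}_n)^{\alpha(\lambda_k)}$) to annihilate $\mathbf{x}_k$, followed by a dimension chase establishing $\ker\mathcal{O}\vert_{V^j}=\ker\mathcal{O}(\mathbf{A}-\lambda_k\mathbf{I}_n)^{\alpha(\lambda_k)}\vert_{V^j}$. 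Your route is instead to equip $\mathrm{Im}\,\mathcal{O}_i$ with an induced $\mathbf{A}$-module structure (well-defined precisely because the unobservable subspace $\ker\mathcal{O}_i$ is $\mathbf{A}$-invariant --- this is exactly the content of your remark that $\mathcal{O}_i$ is built from the blocks $\mathbf{C}_i\mathbf{A}^k$), observe that $\mathcal{O}_i$ then becomes a module homomorphism and hence carries the $\lambda_j$-primary component $V^j$ into the $\lambda_j$-primary component of the image, and conclude independence of the $\mathcal{O}_i(V^j)$ from distinctness of the eigenvalues. Both arguments ultimately rest on the $\mathbf{A}$-invariance of $\ker\mathcal{O}_i$; the paper's version is self-contained and avoids module vocabulary at the cost of a somewhat opaque dimension argument, while yours makes the mechanism transparent and would generalize more readily.

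One caution for your write-up: in the last paragraph you momentarily conflate the state-space spectral projection $q_j(\mathbf{A}):\mathbb{C}^n\to V^j$ with the output-side projection onto $\mathcal{O}_i(V^j)\subseteq\mathbb{C}^{p_i\tau_i}$. These live on different spaces, so the sentence ``$\pi_j(\mathrm{Im}\,\mathcal{O}_i)\subseteq\mathrm{Im}\,\mathcal{O}_i$'' is a type error as written. Only after the induced $\mathbf{A}$-action on $\mathrm{Im}\,\mathcal{O}_i$ is in place can the output-side projection be expressed as $q_j$ of that induced operator; keep the two projections notationally distinct when you formalize this.
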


\begin{proof}
	Properties (1) and (2) in Lemma~\eqref{TechnicalLemma} follow directly from the definition of generalized eigenspace. To simplify notation,  we will drop the sensor index $i$ in this proof.
	
	It also follows from the definition of generalized eigenspace that $\cup_{j=1,\hdots,r}V^j$ spans $\mathbb{C}^n$. Therefore, the set $\cup_{j=1,\hdots,r}\mathcal{O}(V^j)$ spans $\mathcal{O}(\mathbb{C}^n)$. Given this, to conclude property (3) we only need to show:
	$$\mathcal{O}(V^j)\cap \mathcal{O}(V^k)=\{0\},\qquad \forall j\ne k.$$
	Moreover, it suffices to show that for any $\mathbf{x}_j\in V^j$ and $\mathbf{x}_k\in V^k$, with $j\ne k$, the equality $\mathcal{O}(\mathbf{x}_j+\mathbf{x}_k)=0$ can only be satisfied if $\mathcal{O}\mathbf{x}_j=0=\mathcal{O}\mathbf{x}_k$.
	
	
	We have the following sequence of equalities that is explained thereafter:
	\begin{eqnarray}
	0 & = & \mathcal{O}(\mathbf{x}_j+\mathbf{x}_k)\\
	& = & \mathcal{O}(\mathbf{A}-\lambda_k \mathbf{I}_n)^{\alpha(\lambda_k)}(\mathbf{x}_j+\mathbf{x}_k)\\
	& = & \mathcal{O}(\mathbf{A}-\lambda_k \mathbf{I}_n)^{\alpha(\lambda_k)}(\mathbf{x}_j)\\
	& = & \mathcal{O}\mathbf{x}_j.
	\end{eqnarray}
	The second step follows from $\ker\mathcal{O}\subseteq \ker \mathcal{O}(\mathbf{A}-\lambda_k \mathbf{I}_n)^{\alpha(\lambda_k)}$, the third step follows from \mbox{$\mathbf{x}_k\in V^k=\ker (\mathbf{A}-\lambda_k \mathbf{I}_n)^{\alpha(\lambda_k)}$}, and the fourth from the following sequence of steps:
	\begin{eqnarray}
	\dim\ker\mathcal{O}\big\vert_{V^j} &\leq& \dim\ker\mathcal{O}(\mathbf{A}-\lambda_k \mathbf{I}_n)^{\alpha(\lambda_k)}\big\vert_{V^j}\\
	&=& \dim\ker(\mathbf{A}-\lambda_k\mathbf{I}_n)^{\alpha(\lambda_k)}\big\vert_{V^j} \\ 
	&+& \dim\ker\mathcal{O}\big\vert_{(\mathbf{A}-\lambda_k \mathbf{I}_n)^{\alpha(\lambda_k)}{V^j}}\\
	&=& \dim\ker\mathcal{O}\big\vert_{(\mathbf{A}-\lambda_k \mathbf{I}_n)^{\alpha(\lambda_k)}{V^j}}\\
	&\leq& \dim\ker\mathcal{O}\big\vert_{V^j}.
	\end{eqnarray}
	The first step comes from $\ker\mathcal{O}\subseteq \ker\mathcal{O}(\mathbf{A}-\lambda_k\mathbf{I}_n)$. To show that the second step is true, we observe that \\ \mbox{$\mathrm{dim~}\mathrm{ker}~\mathbf{MN} = \mathrm{dim~}\mathrm{ker}~\mathbf{N} + \mathrm{dim~}\mathrm{ker}(\mathbf{M}\big\vert_{\mathbf{N}(\mathbb{C}^n)})$} for any matrices $\mathbf{M},\mathbf{N}\in \mathbb{C}^{n\times n}$. The third step comes from the map $(\mathbf{A}-\lambda_j \mathbf{I}_n)^{\alpha(\lambda_j)}\big\vert_{V^j}$ being injective if $j\ne k$, as the generalized eigenspaces $V^j$ and $V^k$ intersect only at the origin, and $\ker (\mathbf{A}-\lambda_j \mathbf{I}_n)^{\alpha(\lambda_j)}=V^j$. The fourth step follows by the $\mathbf{A}-$invariant nature of eigenspace $V^j$. This shows $\dim\ker\mathcal{O}\big\vert_{V^j}=\dim\ker\mathcal{O}(\mathbf{A}-\lambda_k \mathbf{I}_n)^{\alpha(\lambda_k)}\big\vert_{V^j}$ which, combined with $\ker\mathcal{O}\big\vert_{V^j} \subseteq \ker\mathcal{O}(\mathbf{A}-\lambda_k \mathbf{I}_n)^{\alpha(\lambda_k)}\big\vert_{V^j}$, can only hold when $\ker\mathcal{O}\big\vert_{V^j} = \ker\mathcal{O}(\mathbf{A}-\lambda_k \mathbf{I}_n)^{\alpha(\lambda_k)}\big\vert_{V^j}$. A symmetric argument can be used to show that $\mathcal{O}\mathbf{x}_k=0$ and the claim is thus proved.
\end{proof}
Combining Lemma \eqref{TechnicalLemma} and Proposition \eqref{TechnicalProposition} results in a decomposition of the sensor measurements in~\eqref{modelMeasure}:
\begin{eqnarray}
\mathbf{Y}_i^j=\mathcal{O}_i^j\mathbf{x}_j,\quad j=1,2,\dots,r,
\end{eqnarray}
where $\mathbf{Y}_i^j=\pi_j(Y_i)$ is the projection of measurement $\mathbf{Y}_i$ onto the vector space  $\mathcal{O}_i(V^j)$, the linear transformation $\mathcal{O}_i^j$ is defined by $\mathcal{O}_i^j=\pi_j\circ\mathcal{O}_i\circ \imath_j$, $\mathbf{x}_j$ is given by $\mathbf{x}_j=\pi_j(\mathbf{x})$, $\pi_j: \mathbb{R}^n\rightarrow V^j$ is the canonical projection and $\imath_j: V^j\rightarrow \mathbb{R}^n$ is the canonical inclusion.


\begin{theorem}
	\label{thm:decomformal}
	 A solution $\mathbf{x}$ of the SSR problem with inputs $\mathbf{A},\mathbf{C}_i,\mathbf{Y}_i$ is given by $\mathbf{x}=\mathbf{x}_1+\mathbf{x}_2+\dots +\mathbf{x}_j$ where $\mathbf{x}_i$ is the solution to the SSR problem with inputs $\mathbf{A}^{(j)}=\pi_j\circ \mathbf{A}\circ \imath_j$, $\mathbf{C}_i^j=\mathbf{C}_i\circ\imath_j$, $\mathbf{Y}_i^j$. 
\end{theorem}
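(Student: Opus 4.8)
The plan is to obtain Theorem~\ref{thm:decomformal} as a direct consequence of Lemma~\ref{TechnicalLemma} and Proposition~\ref{TechnicalProposition} applied to the generalized eigenspaces $X^j=V^j$ of $\mathbf{A}$. A first bookkeeping step is to confirm that the map $\mathcal{O}_i^j=\pi_j\circ\mathcal{O}_i\circ\imath_j$ used in the measurement decomposition is, for the purposes of the SSR problem, the observability matrix of the pair $(\mathbf{A}^{(j)},\mathbf{C}_i^j)$ with $\mathbf{A}^{(j)}=\pi_j\circ\mathbf{A}\circ\imath_j$ and $\mathbf{C}_i^j=\mathbf{C}_i\circ\imath_j$. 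Using \eqref{ComuteI} repeatedly gives $\imath_j\circ(\mathbf{A}^{(j)})^k=\mathbf{A}^k\circ\imath_j$, hence $(\mathbf{A}^{(j)})^k=\pi_j\circ\mathbf{A}^k\circ\imath_j$ and $\mathbf{C}_i^j(\mathbf{A}^{(j)})^k=\mathbf{C}_i\mathbf{A}^k\imath_j$; stacking these blocks identifies the observability matrix of $(\mathbf{A}^{(j)},\mathbf{C}_i^j)$ with $\mathcal{O}_i\circ\imath_j$, and since $\mathcal{O}_i\imath_j$ already takes values in $\mathcal{O}_i(V^j)$ this equals $\mathcal{O}_i^j$. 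The two pairs may have different observability indices, but only the kernels and row spaces of these matrices enter the SSR problem, and those coincide.

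The heart of the argument is the following equivalence, which simply records what the proof of Lemma~\ref{TechnicalLemma} already establishes: for any $\mathbf{x}=\mathbf{x}_1+\dots+\mathbf{x}_r$ with $\mathbf{x}_j=\pi_j(\mathbf{x})\in V^j$, a sensor $i$ satisfies $\mathbf{Y}_i=\mathcal{O}_i\mathbf{x}$ if and only if $\mathbf{Y}_i^j=\mathcal{O}_i^j\mathbf{x}_j$ for every $j=1,\dots,r$ — the forward implication is the second half of that proof, the backward implication its first half. Consequently, a set $\mathcal{I}$ satisfies $\mathbf{Y}_i=\mathcal{O}_i\mathbf{x}$ for all $i\notin\mathcal{I}$ precisely when $\mathcal{I}\supseteq\mathcal{I}_j(\mathbf{x}_j):=\{i:\mathbf{Y}_i^j\neq\mathcal{O}_i^j\mathbf{x}_j\}$ for each $j$, so the smallest set explaining a fixed $\mathbf{x}$ is $\bigcup_{j=1}^{r}\mathcal{I}_j(\mathbf{x}_j)$. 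A minor point to dispatch here is how $\mathbf{Y}_i^j$ is defined when $\mathbf{Y}_i\notin\mathcal{O}_i(\mathbb{C}^n)$: one extends the internal direct sum $\mathcal{O}_i(\mathbb{C}^n)=\bigoplus_j\mathcal{O}_i^j(V^j)$ to all of $\mathbb{R}^{p_i\tau_i}$ by adjoining a complement, so that the residual component of $\mathbf{Y}_i$ forces at least one block equation to be unsatisfiable, consistently with sensor $i$ being unavoidably corrupted.

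It remains to turn this into the optimization statement of Theorem~\ref{thm:decomformal}. In one direction, given optimal solutions $\mathbf{x}_j$ of the $r$ sub-problems with attack sets $\mathcal{I}_j$, the vector $\mathbf{x}=\sum_j\imath_j(\mathbf{x}_j)$ is explained by $\bigcup_j\mathcal{I}_j$ by the previous paragraph. In the other direction, given an optimal solution $\mathbf{x}$ of the original problem, each $\pi_j(\mathbf{x})$ should be an optimal solution of sub-problem $j$; I would prove this by an exchange argument, replacing the $V^j$-component of $\mathbf{x}$ by a strictly better sub-$j$-solution and showing this cannot enlarge $\bigcup_k\mathcal{I}_k(\cdot)$. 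This exchange argument is the step I expect to be the main obstacle: a careless replacement of a single component can introduce new corrupted sensors into the other blocks' attack sets, so the replacement must be chosen so that the union does not grow. This is precisely where the uniqueness hypothesis relevant to state reconstruction---the regime in which the rest of the paper is interested---earns its keep, since when each sub-problem has a unique solution, $\pi_j(\mathbf{x})$ is forced to be that solution and the global and block-wise optima necessarily agree. Accordingly, I would first establish Theorem~\ref{thm:decomformal} under this uniqueness assumption and then attempt to remove it via the exchange argument sketched above.
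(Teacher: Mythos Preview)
Your overall approach matches the paper's exactly: the paper's entire proof of Theorem~\ref{thm:decomformal} is the one sentence ``Follows directly from Lemma~\ref{TechnicalLemma}, Proposition~\ref{TechnicalProposition}, and the properties of generalized eigenspaces.'' Your first two paragraphs---verifying that $\mathcal{O}_i^j$ agrees with the observability matrix of $(\mathbf{A}^{(j)},\mathbf{C}_i^j)$ and recording the sensor-by-sensor equivalence $\mathbf{Y}_i=\mathcal{O}_i\mathbf{x}\iff\mathbf{Y}_i^j=\mathcal{O}_i^j\mathbf{x}_j$ for all $j$---simply make explicit what the paper leaves to the reader.

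Where you go beyond the paper is in your third paragraph, and the concern you raise there is real. The SSR problem is a minimization over attack sets, and you correctly identify that the attack set of $\mathbf{x}=\sum_j\mathbf{x}_j$ is $\bigcup_j\mathcal{I}_j(\mathbf{x}_j)$; minimizing the cardinality of a union is \emph{not} the same as minimizing each $|\mathcal{I}_j(\mathbf{x}_j)|$ independently, since separately optimal sub-solutions may have non-overlapping attack sets whose union is larger than that of a coordinated choice. The paper's one-line proof does not engage with this issue at all, so in that sense you are being more careful than the paper, not less. Your proposed resolution---first establish the result under uniqueness, which via Theorem~\ref{thm:equiv1} propagates to each sub-problem in the $2s$-sparse-observable regime the paper actually works in---is exactly the right way to close the gap, and is implicitly how the theorem is used downstream (e.g.\ in Theorem~\ref{thm:reduction}). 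The exchange argument you sketch for the general case is not needed to recover anything the paper subsequently relies on.
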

\begin{proof}
	Follows directly from Lemma \ref{TechnicalLemma}, Proposition \ref{TechnicalProposition}, and the properties of generalized eigenspaces.
\end{proof}

Theorem \eqref{thm:decomformal} lays the theoretical foundation for decomposing the SSR problem with $n$ states into $r$ sub-problems of the form:
\begin{equation}
\begin{aligned}
\label{eqn:decompose}
\mathbf{x}_j(k+1) &= \mathbf{A}^{(j)}\mathbf{x}_j(k),\\
\mathbf{Y}_i^j(k) &= \mathcal{O}_i^j\mathbf{x}_j(k)+\mathbf{E}_i^j(k),
\end{aligned}
\end{equation}
each with $\alpha(\lambda_1),\alpha(\lambda_2),\dots,\alpha(\lambda_r)$ states. The attack vector $\mathbf{E}_i^j$ is identically zero when sensor $i$ is not under attack. The state of the original problem can be reconstructed by summing up the state reconstructions of each sub-problem.

We now illustrate the decomposition of \eqref{eqn:modelsys}-\eqref{eqn:modelmeasure} into \eqref{eqn:decompose} through an example. The matrix $\mathbf{A}$ is the same as the matrix $\mathbf{F}$ defined in \eqref{matrix:F} and the matrices $\mathbf{C}_i$ are given by:
\begin{eqnarray} \notag
\mathbf{C}_1=\begin{bmatrix}3&2&0&2\end{bmatrix}, &&\mathbf{C}_2=\begin{bmatrix}2&3&1&-1\end{bmatrix}, \\ \notag \mathbf{C}_3=\begin{bmatrix}2&2&0&0\end{bmatrix}, &&\mathbf{C}_4=\begin{bmatrix}2&3&-1&0\end{bmatrix}.
\end{eqnarray}
As we discussed below~\eqref{matrix:F}, the generalized eigenspaces of $\mathbf{A}$ are $V^1=\mathrm{Im}(\mathbf{M}_1),V^2=\mathrm{Im}(\mathbf{M}_2),$ and $V^3=\mathrm{Im}(\mathbf{M}_3)$ corresponding to eigenvalues $1,2,$ and $3$ respectively, where $\mathbf{M}_j$ are defined in~\eqref{matrix:M} for $j=1,2,3$. Also, recall that the projections $\pi_1,\pi_2,$ and $\pi_3$ are $\mathbf{P}_j=\mathbf{M}_j(\mathbf{M}_j^T\mathbf{M}_j)^{-1}\mathbf{M}_j^T$ for $j=1,2,3$. By definition, we have $\mathbf{x}_1=\mathbf{P}_1\mathbf{x}$, $\mathbf{x}_2=\mathbf{P}_2\mathbf{x}$, $\mathbf{x}_3=\mathbf{P}_3\mathbf{x}$, and $\mathbf{A}^{(1)}=\mathbf{P}_1\mathbf{A}\vert_{V^1}$, $\mathbf{A}^{(2)}=\mathbf{P}_2\mathbf{A}\vert_{V^2}$, $\mathbf{A}^{(3)}=\mathbf{P}_3\mathbf{A}\vert_{V^3}$. Hence the decomposition of $\mathbf{x}(k+1)=\mathbf{A}\mathbf{x}(k)$ is given by: $$\mathbf{P}_j\mathbf{x}(k+1)=(\mathbf{P}_j\mathbf{A}\vert_{V^j})(\mathbf{P}_j\mathbf{x}(k)),\quad j=1,2,3.$$

We now illustrate how to decompose the measurement equation $\mathbf{Y}_1(k)=\mathcal{O}_1\mathbf{x}(k)+\mathbf{E}_1(k)$ for sensor 1. The observability matrix $\mathcal{O}_1$ of sensor $1$ is given by: $$\mathcal{O}_1=\begin{bmatrix}3 & 2 & \phantom{-}0 & 2\\4 & 3 & -1 & 4\\ 6 & 5 & -3 & 10\\ 10 & 9  & -7 & 28\end{bmatrix}.$$
We first compute the projections $\widetilde\pi_1^1,\widetilde\pi_1^2$ and $\widetilde\pi_1^3$ that map $\mathcal{O}_1(\mathbb{R}^4)$ to $\mathcal{O}_1(V^1),\mathcal{O}_1(V^2),$ and $\mathcal{O}_1(V^3)$, respectively. To do this, we define the matrices: $$\widetilde{\mathbf{M}}_1=\begin{bmatrix}1\\1\\1\\1\end{bmatrix},~\widetilde{\mathbf{M}}_2=\begin{bmatrix}1\\2\\4\\8\end{bmatrix},~\text{and}~\widetilde{\mathbf{M}}_3=\begin{bmatrix}1\\3\\9\\27\end{bmatrix},$$ which satisfy $\mathcal{O}_1(V^1)=\mathrm{Im}(\widetilde{\mathbf{M}}_1)$, $\mathcal{O}_1(V^2)=\mathrm{Im}(\widetilde{\mathbf{M}}_2)$, and $\mathcal{O}_1(V^3)=\mathrm{Im}(\widetilde{\mathbf{M}}_3)$. We also remark that the collection $\{\mathcal{O}_1(V^1),\mathcal{O}_1(V^2),\mathcal{O}_1(V^3)\}$ is an internal direct sum of the vector space $\mathcal{O}_1(\mathbb{R}^4)$. Therefore, by defining $\widetilde{\mathbf{M}}=\begin{bmatrix}\widetilde{\mathbf{M}}_1 & \widetilde{\mathbf{M}}_2 & \widetilde{\mathbf{M}}_3\end{bmatrix}$ and $\widetilde{\mathbf{U}_1}=\begin{bmatrix}1 & 0 & 0\end{bmatrix}$, $\widetilde{\mathbf{U}_2}=\begin{bmatrix}0 & 1 & 0\end{bmatrix}$, $\widetilde{\mathbf{U}_3}=\begin{bmatrix}0 & 0 & 1\end{bmatrix}$, each projection $\widetilde\pi_1^i$ can be represented by the projection matrix: $$\widetilde{\mathbf P}_1^i=\widetilde{\mathbf{M}}_i\widetilde{\mathbf{U}}_i(\widetilde{\mathbf{M}}^T\widetilde{\mathbf{M}})^{-1}\widetilde{\mathbf{M}}^T,~i=1,2,3.$$ By definition, $\mathbf{Y}_1^j=\widetilde{\mathbf P}_1^j\mathbf{Y}_1$, $\mathbf{E}_1^j=\widetilde{\mathbf P}_1^j\mathbf{E}_1$ and $\mathcal{O}_1^j=\widetilde{\mathbf P}_1^j\mathcal{O}_1\vert_{V^j}$ for $j=1,2,3$. In summary, the decomposition of measurement $\mathbf{Y}_1(k)=\mathcal{O}_1\mathbf{x}(k)+\mathbf{E}_1(k)$ is given by: $$\widetilde{\mathbf P}_1^j\mathbf{Y}_1(k)=(\widetilde{\mathbf P}_1^j\mathcal{O}_1\vert_{V^j})(\mathbf{P}_1^j\mathbf{x}(k))+\widetilde{\mathbf{P}}_1^j\mathbf{E}_1(k), \quad j=1,2,3.$$

\section{CLASSES OF SSR PROBLEMS SOLVABLE IN POLYNOMIAL TIME}
\label{sec:Theorem1}
While in the previous section we established that the SSR problem is NP-hard, in this section we leverage the results in Section \ref{sec:decompose} to answer a simple but important question: when can we solve the SSR problem in polynomial time? Our answer relies heavily on the system decomposition technique introduced in Section \ref{sec:decompose}. The first result establishes that the decomposition can be done in polynomial time.

\begin{proposition}
\label{prop:com}
The computational complexity of decomposing the system \eqref{eqn:modelsys}-\eqref{eqn:modelmeasure} into sub-systems \eqref{eqn:decompose} is within $O(pn^3)$.
\end{proposition}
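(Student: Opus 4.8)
The plan is to bound the cost of each step in the construction that realizes the decomposition \eqref{eqn:decompose}, namely: (i) computing the generalized eigenspaces $V^1,\dots,V^r$ of $\mathbf{A}$; (ii) forming the projections $\pi_j$ onto each $V^j$; (iii) for each sensor $i$, forming $\mathcal{O}_i$; (iv) for each sensor $i$ and each $j$, computing $\mathcal{O}_i(V^j)$, the projections $\widetilde\pi_i^j$ onto it, and the reduced maps $\mathcal{O}_i^j=\widetilde\pi_i^j\circ\mathcal{O}_i\circ\imath_j$ and $\mathbf{A}^{(j)}$; and (v) projecting the data $\mathbf{Y}_i$ to $\mathbf{Y}_i^j$. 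Since there are $p$ sensors and $r\le n$ generalized eigenspaces, it suffices to show each of these is $O(n^3)$ per sensor (the eigenspace computation of $\mathbf{A}$ being done once in $O(n^3)$), after which summing over the $p$ sensors gives the claimed $O(pn^3)$.

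First I would handle the $\mathbf{A}$-side computations, which are independent of the sensors. Computing the characteristic polynomial and its (rational) roots, then the generalized eigenspaces $V^j=\ker(\mathbf{A}-\lambda_j\mathbf{I}_n)^{\alpha(\lambda_j)}$ via repeated null-space computations, costs $O(n^3)$ by standard Gaussian elimination — one uses that $\sum_j\alpha(\lambda_j)=n$, so the total work across all eigenvalues telescopes to a constant number of $n\times n$ eliminations. Forming a basis matrix $\mathbf{M}_j$ for each $V^j$ and the projector $\mathbf{P}_j=\mathbf{M}_j(\mathbf{M}_j^T\mathbf{M}_j)^{-1}\mathbf{M}_j^T$ costs $O(n\,\alpha(\lambda_j)^2+\alpha(\lambda_j)^3)$, and $\sum_j$ of this is $O(n^3)$; likewise $\mathbf{A}^{(j)}=\mathbf{P}_j\mathbf{A}\vert_{V^j}$ for all $j$ is $O(n^3)$.

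Next, for a fixed sensor $i$: building $\mathcal{O}_i$ requires at most $n$ matrix–vector-style products $\mathbf{C}_i\mathbf{A}^{k}$ with $\tau_i\le n$ blocks, hence $O(n^3)$ (treating $\sum_i p_i$ as absorbed into the $O(\cdot)$, or stating the bound in terms of the output size $p_i\tau_i\le n$ rows). Applying $\mathcal{O}_i$ to the basis $\mathbf{M}_j$ of $V^j$ gives a basis $\widetilde{\mathbf{M}}_i^j$ for $\mathcal{O}_i(V^j)$; forming $\widetilde{\mathbf{P}}_i^j$ and then $\mathcal{O}_i^j$ and $\mathbf{Y}_i^j=\widetilde{\mathbf{P}}_i^j\mathbf{Y}_i$ is again, summed over $j$, $O(n^3)$ per sensor. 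Multiplying by $p$ sensors yields $O(pn^3)$, and the $\mathbf{A}$-side $O(n^3)$ is dominated by this term. One subtlety to flag is that property (3) of Lemma~\ref{TechnicalLemma} guarantees $\{\mathcal{O}_i(V^j)\}_j$ is an internal direct sum of $\mathcal{O}_i(\mathbb{C}^n)$, so $\widetilde\pi_i^j$ is well defined — but since these subspaces need not be orthogonal, the "projection onto $\mathcal{O}_i(V^j)$ along the other summands" must be computed by inverting the full change-of-basis matrix whose columns are the $\widetilde{\mathbf{M}}_i^j$, an $O(n^3)$ operation per sensor, rather than by the naive orthogonal-projection formula used in the worked example (which happens to apply there because the images were one-dimensional).

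The main obstacle I anticipate is purely bookkeeping: making precise that all intermediate quantities stay of polynomial bit-size (so Gaussian elimination over $\mathbb{Q}$ is genuinely polynomial, not merely $O(n^3)$ arithmetic operations), and being careful that the decomposition step reproduces the \emph{non-orthogonal} projections $\widetilde\pi_i^j$ correctly. Everything else is a routine accumulation of cubic-time linear-algebra primitives, each invoked $O(p\cdot r)=O(pn)$ times on objects of dimension at most $n$, with the per-eigenspace costs telescoping because $\sum_j\alpha(\lambda_j)=n$.
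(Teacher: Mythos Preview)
Your proposal is correct and follows essentially the same approach as the paper: enumerate the linear-algebra primitives needed for the decomposition (Jordan/eigenspace computation for $\mathbf{A}$, observability matrices $\mathcal{O}_i$, images $\mathcal{O}_i(V^j)$, projection matrices, and the online projection of $\mathbf{Y}_i$), bound each by $O(n^3)$ per sensor using $\sum_j \alpha(\lambda_j)=n$, and sum over the $p$ sensors. Your treatment is in fact slightly more careful than the paper's --- you correctly flag that the $\widetilde\pi_i^j$ are projections along a non-orthogonal direct sum (so one should invert the full change-of-basis matrix rather than use the orthogonal-projection formula from the worked example) and you raise the bit-complexity caveat --- but neither point changes the structure or the final $O(pn^3)$ bound.
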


\begin{proof}
To prove this result, we list all the steps involved in the decomposition from \eqref{eqn:modelsys}-\eqref{eqn:modelmeasure} to \eqref{eqn:decompose} and list the computational complexity of each step.

	\textbf{\underline{Offline preparation 1}}: compute the observability matrix of each sensor $\mathcal{O}_i$. The computational complexity of this step is $O(pn^2)$.
	
	\textbf{\underline{Offline preparation 2}}: find the eigenvalues of the matrix $\mathbf{A}$ as well as its generalized eigenspaces $V^j$. This can be done by finding the Jordan form of $\mathbf{A}$. The computational complexity of this step is $O(n^3)$.
	
	\textbf{\underline{Offline preparation 3}}: determine the image of each generalized eigenspace $V^j$ under the observability matrix $\mathcal{O}_i$, i.e., $\mathcal{O}_i(V^j)$. In this step, we perform $p$ times two $n\times n$ matrix multiplications and thus the complexity of this step is $O(pn^3)$.
	
	\textbf{\underline{Offline preparation 4}}: find the projection matrix for each generalized eigenspace and each sensor. The computational complexity of this step is $O(pn^3)$.
	
	\textbf{\underline{Online task}}: at each time instance, project the measurements $\mathbf{Y}_i(k)$ of each sensor $i$ onto each generalized eigenspace. In this step, for each sensor we multiply a $n\times n$ matrix by a $n\times 1$ vector $r$ times. This requires $O(pn^2r)$ time.  
	
We thus conclude that we can decompose the system \eqref{eqn:modelsys}-\eqref{eqn:modelmeasure} into sub-systems \eqref{eqn:decompose} within $O(pn^3)$ and finish the proof.
\end{proof}


Before giving an answer to the question we stated at the beginning of this section, we relate the sparse observability index defined for the system \eqref{eqn:modelsys}-\eqref{eqn:modelmeasure} and the sparse observability index for each subsystem \eqref{eqn:decompose} with $j$ ranging from $1$ to $r$ in the following two results. Note that, since the state space of \eqref{eqn:decompose} is $V^j$, sparse observability is characterized by the injectivity of $\mathcal{O}_i^j\vert_{V^j}$ whereas eigenvalue observability is characterized by injectivity of the linear map $\begin{bmatrix}\mathbf{A}^{(j)}-\lambda_j \mathbf{I}_n^{(j)}\\\mathbf{C}_i^j\end{bmatrix}$, where we define $\mathbf{I}_n^{(j)}=\pi_j\circ \mathbf{I}_n\circ\imath_j$. We now have the following results.

\begin{theorem}
\label{thm:equiv1}
The system \eqref{eqn:modelsys}-\eqref{eqn:modelmeasure} is $k$-sparse observable if and only if for each $j \in \{1, 2, \ldots, r\}$, the system \eqref{eqn:decompose} is $k$-sparse observable.
\end{theorem}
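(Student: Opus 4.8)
The plan is to prove the two directions separately, using the characterization of $k$-sparse observability of the subsystem \eqref{eqn:decompose} in terms of injectivity of $\mathcal{O}_i^j\vert_{V^j}$, together with the decomposition identity $\mathcal{O}_i(\mathbb{C}^n)=\bigoplus_{j=1,\hdots,r}\mathcal{O}_i^j(V^j)$ from Lemma \ref{TechnicalLemma} and Proposition \ref{TechnicalProposition}. The key algebraic fact I would isolate first is the following: for any subset $\mathcal{J}\subseteq\mathcal{V}$ of sensors, $\ker\mathcal{O}_{\mathcal{J}}=\{0\}$ in $\mathbb{C}^n$ if and only if $\ker\mathcal{O}_{\mathcal{J}}^j\vert_{V^j}=\{0\}$ in $V^j$ for every $j\in\{1,\dots,r\}$. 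This is just Lemma \ref{TechnicalLemma} applied to the stacked observability matrix $\mathcal{O}_{\mathcal{J}}$ (note that the direct-sum hypothesis (3) holds for $\mathcal{O}_{\mathcal{J}}$ because each individual $\mathcal{O}_i$ respects the generalized-eigenspace decomposition, hence so does any vertical stack of them): if $\mathbf{x}\in\ker\mathcal{O}_{\mathcal{J}}$, decompose $\mathbf{x}=\sum_j\mathbf{x}_j$ with $\mathbf{x}_j=\pi_j(\mathbf{x})\in V^j$, and the Lemma gives $\mathcal{O}_{\mathcal{J}}^j\mathbf{x}_j=\pi_j(\mathcal{O}_{\mathcal{J}}\mathbf{x})=0$; conversely a nonzero $\mathbf{x}_j$ in some $\ker\mathcal{O}_{\mathcal{J}}^j\vert_{V^j}$ lifts via $\imath_j$ to a nonzero element of $\ker\mathcal{O}_{\mathcal{J}}$ using \eqref{ComuteI}.

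Granting that equivalence, both directions of the theorem follow by unwinding Definition \ref{def:sparse_obs_index}. For the ``only if'' direction: suppose the full system is $k$-sparse observable, fix $j$, and take any $\mathcal{K}\subseteq\mathcal{V}$ with $|\mathcal{K}|\le k$. Then $\ker\mathcal{O}_{\mathcal{V}\backslash\mathcal{K}}=\{0\}$ in $\mathbb{C}^n$, so by the equivalence $\ker\mathcal{O}_{\mathcal{V}\backslash\mathcal{K}}^j\vert_{V^j}=\{0\}$; since $\mathcal{K}$ was arbitrary with $|\mathcal{K}|\le k$, the subsystem \eqref{eqn:decompose} is $k$-sparse observable. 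For the ``if'' direction: suppose every subsystem is $k$-sparse observable, and take $\mathcal{K}\subseteq\mathcal{V}$ with $|\mathcal{K}|\le k$. For each $j$ we have $\ker\mathcal{O}_{\mathcal{V}\backslash\mathcal{K}}^j\vert_{V^j}=\{0\}$, and the equivalence then forces $\ker\mathcal{O}_{\mathcal{V}\backslash\mathcal{K}}=\{0\}$ in $\mathbb{C}^n$; hence the full system is $k$-sparse observable.

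The main obstacle is not the quantifier bookkeeping but making sure the direct-sum property (3) of Lemma \ref{TechnicalLemma} genuinely transfers from each $\mathcal{O}_i$ to every vertical stack $\mathcal{O}_{\mathcal{J}}$, and that ``$\mathcal{O}_{\mathcal{J}}^j$'' as used here coincides with the stack of the individual $\mathcal{O}_i^j$'s. I would verify this by noting that $\ker\mathcal{O}_{\mathcal{J}}=\bigcap_{i\in\mathcal{J}}\ker\mathcal{O}_i$, and each $\ker\mathcal{O}_i$ is $\mathbf{A}$-invariant; an intersection of $\mathbf{A}$-invariant subspaces is $\mathbf{A}$-invariant, so by the same argument used in Proposition \ref{TechnicalProposition} (which only used $\mathbf{A}$-invariance of $\ker\mathcal{O}$ to derive property (3)) the decomposition $\mathcal{O}_{\mathcal{J}}(\mathbb{C}^n)=\bigoplus_j\mathcal{O}_{\mathcal{J}}^j(V^j)$ holds. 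With that in hand the rest is immediate, so I would keep the written proof short: state the subset-indexed equivalence as the crux, cite Lemma \ref{TechnicalLemma} and Proposition \ref{TechnicalProposition}, and then dispatch both directions in a couple of lines each.
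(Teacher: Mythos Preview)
Your proposal is correct and follows essentially the same route as the paper. The paper's proof is just a one-line sketch stating that the result follows from the identity $\ker\mathcal{O}_i=\bigoplus_{j=1}^r\ker\mathcal{O}_i^j$ for each sensor $i$ (and omits the remainder); your argument simply unpacks this, working with the stacked $\mathcal{O}_{\mathcal{J}}$ and verifying the $\mathbf{A}$-invariance of $\ker\mathcal{O}_{\mathcal{J}}$ so that Proposition~\ref{TechnicalProposition} applies to it directly.
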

\begin{proof}
This result can be easily established by observing that $\mathrm{ker}~\mathcal{O}_i=\oplus_{j=1}^r\mathrm{ker}~\mathcal{O}_i^j$ holds for any sensor $i$. We omit the proof here in the interest of space.
\end{proof}

Similarly, to relate the eigenvalue observability index defined for the overall system and the eigenvalue observability index for each subsystem, we have the following result.

\begin{theorem}
\label{thm:equiv2}
The system \eqref{eqn:modelsys}-\eqref{eqn:modelmeasure} is $k$-eigenvalue observable if and only if for each $j \in \{1, 2, \ldots, r\}$, the system \eqref{eqn:decompose} is $k$-eigenvalue observable.
\end{theorem}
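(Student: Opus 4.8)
The plan is to reduce Theorem~\ref{thm:equiv2} to a statement about a single sensor and a single eigenvalue, and then finish by counting sensors, exactly in the spirit of the proof of Theorem~\ref{thm:equiv1}. The starting observation is that the state space of subsystem~$j$ is the generalized eigenspace $V^j=\mathrm{ker}(\mathbf{A}-\lambda_j\mathbf{I}_n)^{\alpha(\lambda_j)}$, so $\mathbf{A}^{(j)}$ has exactly one eigenvalue, namely $\lambda_j$. Hence, by definition, subsystem~$j$ is $k$-eigenvalue observable if and only if $\mathrm{ker}\begin{bmatrix}\mathbf{A}^{(j)}-\lambda_j\mathbf{I}_n^{(j)}\\ \mathbf{C}_i^j\end{bmatrix}=\{0\}$ for at least $k+1$ distinct sensors $i$, and the overall system is $k$-eigenvalue observable if and only if, for every $\lambda_j\in sp(\mathbf{A})$, the map $\begin{bmatrix}\mathbf{A}-\lambda_j\mathbf{I}_n\\ \mathbf{C}_i\end{bmatrix}$ is injective for at least $k+1$ distinct sensors $i$. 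It therefore suffices to prove, for each $j$ and each sensor $i$, the equivalence
\[
\mathrm{ker}\begin{bmatrix}\mathbf{A}-\lambda_j\mathbf{I}_n\\ \mathbf{C}_i\end{bmatrix}=\{0\}
\quad\Longleftrightarrow\quad
\mathrm{ker}\begin{bmatrix}\mathbf{A}^{(j)}-\lambda_j\mathbf{I}_n^{(j)}\\ \mathbf{C}_i^j\end{bmatrix}=\{0\}.
\]

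The key step I would establish is the containment $\mathrm{ker}(\mathbf{A}-\lambda_j\mathbf{I}_n)\subseteq V^j$. To see it, write $v=\sum_{\ell}v_\ell$ with $v_\ell\in V^\ell$; since each $V^\ell$ is $\mathbf{A}$-invariant, $(\mathbf{A}-\lambda_j\mathbf{I}_n)v=0$ decomposes as $\sum_\ell(\mathbf{A}-\lambda_j\mathbf{I}_n)v_\ell=0$ with the $\ell$-th summand in $V^\ell$, so $(\mathbf{A}-\lambda_j\mathbf{I}_n)v_\ell=0$ for every $\ell$; for $\ell\ne j$ the restriction $(\mathbf{A}-\lambda_j\mathbf{I}_n)\vert_{V^\ell}$ is invertible (its only eigenvalue is $\lambda_\ell\ne\lambda_j$), hence $v_\ell=0$ and $v=v_j\in V^j=\mathrm{Im}\,\imath_j$. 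Next, I would apply the commutation identity~\eqref{ComuteI} to the $\mathbf{A}$-invariant map $\mathbf{A}-\lambda_j\mathbf{I}_n$, obtaining $\imath_j\circ(\mathbf{A}^{(j)}-\lambda_j\mathbf{I}_n^{(j)})=(\mathbf{A}-\lambda_j\mathbf{I}_n)\circ\imath_j$, and recall that $\mathbf{C}_i^j=\mathbf{C}_i\circ\imath_j$. Since $\imath_j$ is injective, these two identities show that for $v_j\in V^j$ one has $v_j\in\mathrm{ker}\begin{bmatrix}\mathbf{A}^{(j)}-\lambda_j\mathbf{I}_n^{(j)}\\ \mathbf{C}_i^j\end{bmatrix}$ precisely when $\imath_j(v_j)\in\mathrm{ker}\begin{bmatrix}\mathbf{A}-\lambda_j\mathbf{I}_n\\ \mathbf{C}_i\end{bmatrix}$. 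Combining this with the containment just established, $\imath_j$ restricts to a bijection between the two kernels, so one is $\{0\}$ iff the other is, which is the displayed equivalence.

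Assembling the theorem is then immediate. If the overall system is $k$-eigenvalue observable, then for each $j$ at least $k+1$ sensors make $\begin{bmatrix}\mathbf{A}-\lambda_j\mathbf{I}_n\\ \mathbf{C}_i\end{bmatrix}$ injective, so by the equivalence the same $k+1$ sensors make $\begin{bmatrix}\mathbf{A}^{(j)}-\lambda_j\mathbf{I}_n^{(j)}\\ \mathbf{C}_i^j\end{bmatrix}$ injective, i.e., subsystem~$j$ is $k$-eigenvalue observable; this holds for every $j$. Conversely, if every subsystem~$j$ is $k$-eigenvalue observable, the equivalence returns, for each $j$, at least $k+1$ sensors observing $\lambda_j$ in the overall system; since $sp(\mathbf{A})=\{\lambda_1,\dots,\lambda_r\}$, every eigenvalue of $\mathbf{A}$ is then observed by at least $k+1$ sensors, so the overall system is $k$-eigenvalue observable.

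I do not anticipate a genuine obstacle; once the decomposition and the identities~\eqref{ComuteP}--\eqref{ComuteI} are in place, the argument is bookkeeping with the canonical inclusions and projections. The one step that must not be skipped is the containment $\mathrm{ker}(\mathbf{A}-\lambda_j\mathbf{I}_n)\subseteq V^j$: this is exactly what makes the ``if'' direction go through, since a priori a kernel vector of the original pencil $\begin{bmatrix}\mathbf{A}-\lambda_j\mathbf{I}_n\\ \mathbf{C}_i\end{bmatrix}$ could carry components in eigenspaces $V^\ell$ with $\ell\ne j$ that the subsystem-$j$ analysis never sees; equivalently, $(\mathbf{A}-\lambda_j\mathbf{I}_n)$ restricted to $\bigoplus_{\ell\ne j}V^\ell$ is an isomorphism, which localizes the entire kernel computation to $V^j$.
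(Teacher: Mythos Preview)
Your proposal is correct and follows essentially the same approach as the paper: both reduce to the single-sensor, single-eigenvalue equivalence, establish the key containment $\ker(\mathbf{A}-\lambda_j\mathbf{I}_n)\subseteq V^j$ via the generalized-eigenspace decomposition and invertibility of $(\mathbf{A}-\lambda_j\mathbf{I}_n)\vert_{V^\ell}$ for $\ell\ne j$, and then use the commutation identity~\eqref{ComuteI} together with $\mathbf{C}_i^j=\mathbf{C}_i\circ\imath_j$ to identify the two kernels. Your presentation of $\imath_j$ as a bijection between the kernels is slightly more explicit than the paper's, but the mathematical content is the same.
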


\begin{proof}
By the definition of eigenvalue observability, it suffices to show the matrix $\begin{bmatrix}\mathbf{A}-\lambda_j\mathbf{I}_n \\ \mathbf{C}_i\end{bmatrix}$ has full column rank if and only if each matrix $\begin{bmatrix}\mathbf{A}^{(j)}-\lambda_j\mathbf{I}_n^{(j)} \\ \mathbf{C}_i^j\end{bmatrix}$ defines an injective map with domain $V^j$, for $j$ ranging from $1$ to $r$.

Consider the map $F:V\rightarrow V\times \mathbb{R}^{p_i}$ defined by the matrix $\begin{bmatrix}\mathbf{A}-\lambda_j\mathbf{I}_n \\ \mathbf{C}_i\end{bmatrix}$ and note that $F$ being injective is equivalent to $\ker F=\{0\}$. Note also that the result immediately follows if we establish that $\ker F\subseteq V^{j}$. This can be seen by noting that $F\mathbf{x}=0$ for $\mathbf{x}\in \mathbb{R}^n$ degenerates to $F\mathbf{x}=0$ for $\mathbf{x}\in V^j$ and (given $\mathbf{x}=\imath_j \mathbf{x}$) can be written as $F\imath_j \mathbf{x}=0$:
	\begin{equation}
	    \label{Eq:MatrixEq1}
	    \begin{bmatrix}\mathbf{A}\circ \imath_j-\lambda_j\imath_j \\ \mathbf{C}_i\circ  \imath_j\end{bmatrix}\mathbf{x}=0.
	\end{equation}
	Moreover, since $(\mathbf{A}-\lambda_j \mathbf{I}_n)(V^j)\subseteq V^j$ we have the equality $\pi_j(\mathbf{A}-\lambda_j \mathbf{I}_n)\imath_j \mathbf{x}=(\mathbf{A}-\lambda_j \mathbf{I}_n)\imath_j \mathbf{x}$.  Therefore,~\eqref{Eq:MatrixEq1} degenerates into:
		\begin{equation}
	    \begin{bmatrix}\pi_j\circ \mathbf{A}\circ \imath_j-\lambda_j\pi_j\circ\imath_j \\ \mathbf{C}_i\circ  \imath_j\end{bmatrix}\mathbf{x}=
	    \begin{bmatrix} \mathbf{A}^{(j)}-\lambda_j \mathbf{I}_n^{(j)} \\ \mathbf{C}_i^j\end{bmatrix}\mathbf{x}=0.
	\end{equation}
	Therefore, we proceed by showing that $\ker F\subseteq V^j$. The equality $F\mathbf{x}=0$ implies $(\mathbf{A}-\lambda_j \mathbf{I}_n)\mathbf{x}=0$. If we write $\mathbf{x}$ as $\mathbf{x}_j+\mathbf{x}_{\overline{j}}$ with $\mathbf{x}_j=\pi_j(\mathbf{x})$ and $\mathbf{x}_{\overline{j}}=\sum_{k=1, k\ne j}^r \pi_k(\mathbf{x})$ we have $(\mathbf{A}-\lambda_j \mathbf{I}_n)(\mathbf{x}_j+\mathbf{x}_{\overline{j}})=0$. We now make two observations. The first is that $(\mathbf{A}-\lambda_j \mathbf{I}_n)\mathbf{x}_{\overline{j}}= 0$ implies $\mathbf{x}_{\overline{j}}=0$ since $\mathbf{x}_{\overline{j}}\ne 0$ would imply that $\mathbf{x}_{\overline{j}}\in V^j$, by definition of $V^j$. The second observation is that $(\mathbf{A}-\lambda_j \mathbf{I}_n)(V^\ell)\subseteq V^\ell$, for $\ell\in \{1,\hdots,r\}$, implies that $(\mathbf{A}-\lambda_j \mathbf{I}_n)(\mathbf{x}_j+\mathbf{x}_{\overline{j}})=0$ iff $(\mathbf{A}-\lambda_j \mathbf{I}_n)\mathbf{x}_j=0$ and $(\mathbf{A}-\lambda_j \mathbf{I}_n)\mathbf{x}_{\overline{j}}=0$. Together with the first observation we have $\mathbf{x}_{\overline{j}}=0$ which implies that $\mathbf{x}\in V^j$ and concludes the proof.
\end{proof}

Based on the above decomposition and the assumption that at most $s$ sensors are attacked, we partition the set of eigenvalues $\{\lambda_1, \lambda_2, \ldots, \lambda_r\}$ as follows:
\begin{itemize}
    \item We define $\mathcal{J}_1 \subseteq \{\lambda_1, \lambda_2, \ldots, \lambda_r\}$ to be the set of eigenvalues whose corresponding subsystems \eqref{eqn:decompose} are not $2s$-sparse observable.
    \item We define $\mathcal{J}_2 \subseteq \{\lambda_1, \lambda_2, \ldots, \lambda_r\}\setminus \mathcal{J}_1$ to be the set of eigenvalues whose corresponding subsystems \eqref{eqn:decompose} are $2s$-eigenvalue observable.
    \item We define $\mathcal{J}_3 = \{\lambda_1, \lambda_2, \ldots, \lambda_r\}\setminus \{\mathcal{J}_1 \cup \mathcal{J}_2\}$ to be the set of eigenvalues whose corresponding subsystems \eqref{eqn:decompose} are $2s$-sparse observable but not $2s-$eigenvalue observable.
\end{itemize}

\subsection{Impossibility of Reconstructing Substates Corresponding to Eigenvalues in the Set $\mathcal{J}_1$}
It is established in Section \eqref{sec:prob_form} that the SSR problem does not admit a unique solution if it is not $2s-$sparse observable. Therefore, it is impossible to reconstruct the substates corresponding to eigenvalues in $\mathcal{J}_1$. Furthermore, by Theorem~\eqref{thm:equiv1} if $\mathcal{J}_1$ is not empty, the overall system defined in \eqref{eqn:modelsys}-\eqref{eqn:modelmeasure} is not $2s-$sparse observable, which in turn means the solution is not unique.

\subsection{Reconstructing the Substates Corresponding to Eigenvalues in the Set $\mathcal{J}_2$}

We learned from Theorem~\eqref{thm:equiv2} that if $\lambda_j$ is observable w.r.t. sensor $i$, then after decomposing the system, $\lambda_j$ is also observable w.r.t. to sensor $i$ in the $j$-th sub-system corresponding to this sensor. By the Popov-Belevitch-Hautus (PBH) test, the $j$-th sub-system ($\mathbf{A}^{(j)},\mathbf{C}_i^j$) is observable, which shows that $\mathbf{x}_j$ can be reconstructed using only measurements from sensor $i$.

We now explain how to reconstruct the substates corresponding to eigenvalues in $\mathcal{J}_2$ based on majority voting. Consider any eigenvalue $\lambda_j\in \mathcal{J}_2$. Let $S_{\lambda_j}$ represent the set of sensors w.r.t. which $\lambda_j$ is observable. The result of the PBH test implies that $\mathbf{x}_j$ can be recovered using the measurements of each of the sensors in the set $\mathcal{S}_{\lambda_j}$. We denote by $x_j^{(l)}$ the $l$th component of $\mathbf{x}_j$. Based on the definition of the set $\mathcal{J}_2$, we have $|\mathcal{S}_{\lambda_j}| \geq (2s+1)$. Consequently, since at most $s$ sensors have been compromised, we are guaranteed at least $s+1$ consistent copies of the state $x_j^{(l)}$. Thus, each component of the vector $x_j^{(l)}$ can be recovered via majority voting and therefore all the substates corresponding to eigenvalues in $\mathcal{J}_2$ can be reconstructed in polynomial time.

\subsection{Computational Complexity of Reconstructing Substates Corresponding to Eigenvalues in the Set $\mathcal{J}_3$}

The NP-hardness of solving the SSR problem has been established in Section~\ref{sec: SSRhard}. In this subsection, we argue that with the prescribed decomposition technique, the computational complexity of solving the SSR problem for substates corresponding to eigenvalues in $\mathcal{J}_3$ could be reduced whenever we only need to reconstruct substates whose dimension is smaller than $n$. Assuming $s$ is the upper bound of the number of attacked sensors, we have the following theorem.

\begin{theorem}
	\label{thm:reduction}
	By applying the decomposition \eqref{eqn:decompose}, the SSR problem can be solved in time $\sum_{\lambda_j\in\mathcal{J}_3}\mathcal{C}(p,n_j)+O(pn^3)$ if the system \eqref{eqn:modelsys}-\eqref{eqn:modelmeasure} is $2s-$sparse observable, where $\mathcal{C}(p,n)$ is the time complexity of solving an instance of the SSR problem with $n$ states and $p$ sensors whose corresponding system is $2s-$sparse observable.
\end{theorem}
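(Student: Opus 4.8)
The plan is to assemble the running time of a complete SSR solver built directly on top of the decomposition of Section~\ref{sec:decompose}, accounting separately for the three partition classes $\mathcal{J}_1$, $\mathcal{J}_2$, $\mathcal{J}_3$. First I would invoke the hypothesis: the overall system is $2s$-sparse observable, so by Theorem~\ref{thm:equiv1} every subsystem \eqref{eqn:decompose} is $2s$-sparse observable, hence $\mathcal{J}_1=\emptyset$ and only $\mathcal{J}_2$ and $\mathcal{J}_3$ contribute. Next I would charge the cost of producing the decomposition itself — computing the generalized eigenspaces $V^j$, the projections $\pi_j$, the maps $\mathcal{O}_i^j$ and the projected measurements $\mathbf{Y}_i^j$ — which Proposition~\ref{prop:com} bounds by $O(pn^3)$; this is the additive $O(pn^3)$ term in the statement.

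Then I would handle $\mathcal{J}_2$. For each $\lambda_j\in\mathcal{J}_2$ the subsystem is $2s$-eigenvalue observable, so by Theorem~\ref{thm:equiv2} combined with the PBH argument in Subsection~6.2, the substate $\mathbf{x}_j\in V^j$ is observable from each individual sensor in $\mathcal{S}_{\lambda_j}$, with $|\mathcal{S}_{\lambda_j}|\ge 2s+1$. Reconstructing $\mathbf{x}_j$ from one sensor is an ordinary (unattacked) linear observation inversion, which is polynomial in $n_j=\alpha(\lambda_j)\le n$ and the time horizon; doing this for all $\ge 2s+1$ sensors and taking a component-wise majority vote over the at most $s$ corrupted copies is polynomial in $p$ and $n$. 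So the entire $\mathcal{J}_2$ block costs a polynomial in $p$ and $n$, which I would absorb into (or dominate by) the $O(pn^3)$ already present — here I should be a little careful to argue the exponents genuinely fit under $pn^3$, or else state the bound with a slightly larger but still polynomial term; this bookkeeping is the only mildly delicate point. Finally, for each $\lambda_j\in\mathcal{J}_3$ the subsystem \eqref{eqn:decompose} has $n_j$ states, $p$ sensors, and is $2s$-sparse observable by the argument above, so solving its SSR instance costs exactly $\mathcal{C}(p,n_j)$ by definition of $\mathcal{C}$; summing over $\mathcal{J}_3$ gives $\sum_{\lambda_j\in\mathcal{J}_3}\mathcal{C}(p,n_j)$.

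To close, I would invoke Theorem~\ref{thm:decomformal}: a solution $\mathbf{x}$ of the original SSR problem is recovered as $\mathbf{x}=\mathbf{x}_1+\dots+\mathbf{x}_r$ from the subsystem solutions, and this final summation is $O(n r)\subseteq O(n^2)$, again absorbed into $O(pn^3)$. Adding the three contributions yields the claimed bound $\sum_{\lambda_j\in\mathcal{J}_3}\mathcal{C}(p,n_j)+O(pn^3)$. The main obstacle is not conceptual but accounting: making sure every auxiliary step (eigen-decomposition, projections, single-sensor inversions, majority voting, final reassembly) is honestly polynomial and genuinely subsumed by the $O(pn^3)$ overhead term, so that the only term that is not a priori polynomial — the genuinely hard $\mathcal{J}_3$ part — is exactly the $\sum_{\lambda_j\in\mathcal{J}_3}\mathcal{C}(p,n_j)$ appearing in the statement. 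If one wants to be maximally careful, one should also note that $\sum_{\lambda_j\in\mathcal{J}_3} n_j \le n$, so the $\mathcal{J}_3$ instances are strictly smaller than the original problem whenever $\mathcal{J}_2\cup\mathcal{J}_1$ is nonempty, which is the qualitative point the theorem is meant to convey.
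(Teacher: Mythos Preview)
Your proposal is correct and follows essentially the same route as the paper: use Theorem~\ref{thm:equiv1} to conclude $\mathcal{J}_1=\emptyset$, charge $O(pn^3)$ for the decomposition via Proposition~\ref{prop:com}, handle $\mathcal{J}_2$ by majority voting, charge $\mathcal{C}(p,n_j)$ for each $\lambda_j\in\mathcal{J}_3$, and reassemble via Theorem~\ref{thm:decomformal}. The one place where the paper is slightly more explicit is your ``mildly delicate'' bookkeeping point: it asserts directly that majority voting on the $j$-th subsystem costs $O(pn_j^2)$, so $\sum_{\lambda_j\in\mathcal{J}_2}O(pn_j^2)\le O(pn^3)$ and the $\mathcal{J}_2$ contribution is cleanly absorbed.
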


Before providing a proof we first discuss how this result may reduce the computational complexity of solving the SSR problem. For a large-scale CPS, it's not uncommon for the number of sensors to greatly exceed the number of states, i.e., $p\gg n$. We note that the computational complexity of brute force search grows exponentially with $p$. Also, the computational complexity of some brute force search algorithms (such as~\cite{7171098}) to determine whether a set of sensors is attacked is at least $O(n^2)$ . In other words, for such algorithms $\mathcal{C}(p,n)\geq O(p^2n^2)$. By assuming $p\gg n$ we make the following observations:
\begin{enumerate}
	\item $O(p^2n^2)\geq \sum_{j=1}^r O(p^2n_j^2)$, and equality holds only when $r=1$.
	\item $O(pn^3)\ll \sum_{j=1}^r O(p^2n_j^2)$.
\end{enumerate}

The first observation shows that the computation required to solve all the sub-problems is smaller than what is required to solve the original problem. The second observation shows that, compared with the computational complexity of solving the SSR problem, the computation required for decomposition of the original system is negligible. These two facts indicate that by decomposing the SSR problem into simpler instances, we reduce the computational complexity of solving the SSR problem.
\vspace{1.8mm}

\noindent \textit{Proof of Theorem~\ref{thm:reduction}}:
	We already established that reconstructing the state of each decomposed system is also an SSR problem and the solution $\mathbf{x}$ of the original problem is obtained by summing over all the projections, i.e., $\mathbf{x}=\mathbf{x}_1+\mathbf{x}_2+\dots +\mathbf{x}_r$. Therefore any algorithm that solves the SSR problem can be applied to solve each subproblem, i.e., we may solve each subproblem corresponding to $\lambda_j\in\mathcal{J}_3$ within time complexity $\mathcal{C}(p,n_j)$ since there are $p$ sensors and $n_j$ states. By the assumption that the system \eqref{eqn:modelsys}-\eqref{eqn:modelmeasure} is $2s-$sparse observable as well as Theorem~\eqref{thm:equiv1}, all sub-systems are $2s-$sparse observable and hence $\mathcal{J}_1=\{\phi\}$, and for each subproblem corresponding to $\lambda_j\in\mathcal{J}_2$ the time complexity of the majority voting algorithm is within $O(pn^2)$. In summary, the total computational complexity is: 
\begin{eqnarray}
&&\sum_{\lambda_j\in\mathcal{J}_2}O(pn_j^2)+\sum_{\lambda_j\in\mathcal{J}_3}\mathcal{C}(p,n_j)+O(pn^3) \\ 
=&&\sum_{\lambda_j\in\mathcal{J}_3}\mathcal{C}(p,n_j)+O(pn^3), 
\end{eqnarray}
which finishes the proof.

\hfill \qedsymbol

\begin{remark}
The actual complexity might be even smaller than $\sum_{\lambda_j\in\mathcal{J}_3}\mathcal{C}(p,n_j)+O(pn^3)$. This can be seen by noting that we solve each smaller SSR problem sequentially, and thus we can remove measurements from sensors that have been identified as being attacked when solving subsequent problems.
\end{remark}

To conclude, we have the following result which answers the question at the beginning of this section by pointing out when the SSR problem can be solved in polynomial time, which actually is a corollary of Theorem~\eqref{thm:reduction}.
\begin{corollary} 
\label{the:poly}
Consider the system \eqref{eqn:modelsys}-\eqref{eqn:modelmeasure}, and suppose at most $s$ sensors are attacked. Let the eigenvalue observability index of system \eqref{eqn:modelsys}-\eqref{eqn:modelmeasure} be at least $2s$. Then, the SSR problem can be solved in polynomial time. 
\end{corollary}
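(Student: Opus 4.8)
\noindent\textit{Proof proposal for Corollary~\ref{the:poly}.}
The plan is to obtain the corollary as the specialization of Theorem~\ref{thm:reduction} to the case where the set $\mathcal{J}_3$ is empty, so that the only nontrivial term in the complexity bound $\sum_{\lambda_j\in\mathcal{J}_3}\mathcal{C}(p,n_j)+O(pn^3)$ disappears. First I would carry out the decomposition of~\eqref{eqn:modelsys}-\eqref{eqn:modelmeasure} into the $r$ subsystems~\eqref{eqn:decompose}; by Proposition~\ref{prop:com} this costs only $O(pn^3)$, i.e., polynomial time.

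The heart of the argument is to show that the hypothesis forces $\mathcal{J}_1=\mathcal{J}_3=\emptyset$, hence $\mathcal{J}_2=sp(\mathbf{A})$. Since the eigenvalue observability index of~\eqref{eqn:modelsys}-\eqref{eqn:modelmeasure} is at least $2s$, every $\lambda_j\in sp(\mathbf{A})$ satisfies $|\mathcal{S}_{\lambda_j}|\geq 2s+1$, and by Theorem~\ref{thm:equiv2} each subsystem~\eqref{eqn:decompose} is again $2s$-eigenvalue observable. Because the $j$-th subsystem has $\lambda_j$ as its \emph{only} eigenvalue, the PBH test shows that observability of $\lambda_j$ w.r.t. sensor $i$ is equivalent to observability of the pair $(\mathbf{A}^{(j)},\mathbf{C}_i^j)$, i.e., to injectivity of $\mathcal{O}_i^j\vert_{V^j}$. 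Consequently, after deleting any $\mathcal{K}\subseteq\mathcal{V}$ with $|\mathcal{K}|\leq 2s$, at least one surviving sensor still observes the subsystem, so $\mathrm{ker}~\mathcal{O}^j_{\mathcal{V}\setminus\mathcal{K}}=\{0\}$; thus each subsystem is $2s$-sparse observable, giving $\mathcal{J}_1=\emptyset$, and being also $2s$-eigenvalue observable, $\mathcal{J}_3=\emptyset$. By Theorem~\ref{thm:equiv1} the overall system is then $2s$-sparse observable, which is exactly the hypothesis needed to invoke Theorem~\ref{thm:reduction}.

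With $\mathcal{J}_2=sp(\mathbf{A})$, I would then apply the majority-voting reconstruction of Section~\ref{sec:Theorem1} subsystem by subsystem: for each $\lambda_j$ and each $i\in\mathcal{S}_{\lambda_j}$, invert the observable pair $(\mathbf{A}^{(j)},\mathbf{C}_i^j)$ to obtain a candidate for $\mathbf{x}_j=\pi_j(\mathbf{x})$; since at most $s$ of the $\geq 2s+1$ sensors in $\mathcal{S}_{\lambda_j}$ are corrupted, at least $s+1$ of these candidates agree with the true $\mathbf{x}_j$, so a component-wise majority vote returns $\mathbf{x}_j$ in time $O(pn_j^2)$. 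Summing, $\mathbf{x}=\mathbf{x}_1+\dots+\mathbf{x}_r$ is recovered and the minimal attack set $\mathcal{I}$ is read off as the set of sensors whose measurements are inconsistent with the reconstructed state; by Theorem~\ref{thm:reduction} the total cost is $\sum_{\lambda_j\in\mathcal{J}_2}O(pn_j^2)+O(pn^3)=O(pn^3)$, which is polynomial.

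The step I expect to require the most care is the implication ``$2s$-eigenvalue observability of a subsystem $\Rightarrow$ $2s$-sparse observability of that subsystem'', i.e., confirming that for a subsystem with a single eigenvalue the PBH condition for that eigenvalue already yields full observability via each sensor in $\mathcal{S}_{\lambda_j}$; everything else is bookkeeping over the decomposition and a reuse of the arguments in Sections~\ref{sec:decompose}-\ref{sec:Theorem1}. A minor subtlety worth a remark is that complex-conjugate pairs of eigenvalues should be treated together so that the reconstructed substates, and hence their sum, are real, but this does not affect the polynomial complexity count.
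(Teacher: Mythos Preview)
Your proposal is correct and takes essentially the same approach as the paper, which simply declares the result ``a corollary of Theorem~\ref{thm:reduction}'' and relies on the preceding discussion of the set $\mathcal{J}_2$ together with the PBH argument for single-eigenvalue subsystems. You actually fill in more detail than the paper does---in particular, the implication you flag as the delicate step ($2s$-eigenvalue observability $\Rightarrow$ $2s$-sparse observability for a subsystem with a single eigenvalue, hence $\mathcal{J}_1=\emptyset$) is precisely what the paper leaves implicit here and only records separately later as Proposition~\ref{prop:implication1}.
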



\begin{remark}
Another understanding of this classification of eigenvalues into $\mathcal{J}_1$, $\mathcal{J}_2,$ and $\mathcal{J}_3$ is provided by the vulnerability of the corresponding substates. Substates in $\mathcal{J}_1$ are the most vulnerable to attack since the defender may not even be able to identify the attacked set of sensors. Substates in $\mathcal{J}_2$ are robust against attacks since attacked sensors can be easily determined. For substates $\mathcal{J}_3$, the defender is able to identify the attacked sensors, but this task requires a substantially higher computational effort.

In other words, in the view of the adversary, a wise attacking strategy is to attack the substates corresponding to eigenvalues in $\mathcal{J}_1$, and it should avoid attacking states in $\mathcal{J}_2$ since majority voting will allow the defender to easily identify the compromised sensors.
\end{remark}

\subsection{Example - Continued}
In this subsection we continue the example in Section~\ref{sec:prelim} and Section~\ref{sec:decompose} and show how to classify each subsystem under the assumption that the adversary can attack at most $s=1$ sensor. We recall that $V^1,V^2,V^3$ are the eigenspaces corresponding to eigenvalues 1, 2, and 3, respectively. Also, after decomposition, we have  $\mathbf{A}^{(j)}=\mathbf{P}_j\mathbf{A}\vert_{V^j}$ as well as $\mathcal{O}_i^j=\widetilde{\mathbf{P}}_i^j\mathcal{O}_i\vert_{V^j}$ for $i=1,2,3,4$ and $j=1,2,3.$ 

We first claim that $\lambda_3=3$ belongs to $\mathcal{J}_1$. To see why this is true, we remove $2s=2$ sensors, sensor 1 and sensor 4, and explicitly compute $\mathcal{O}_2^3$ and $\mathcal{O}_3^3$. We have: $$\mathcal{O}_2=\begin{bmatrix}2 & 3 & \phantom{-}1 & -1\\ 3 & 4 & \phantom{-}0 & -1 \\ 5 & 6 & -2 & -1 \\ 9 & 10 & -6 & -1\end{bmatrix},\mathcal{O}_3=\begin{bmatrix}2 & 2 & \phantom{-}0 & \phantom{-}0\\ 3 & 3 & -1 & \phantom{-}0 \\ 5 & 5 & -3 & -0 \\ 9 & 9 & -7 & \phantom{-}0\end{bmatrix},$$ and $\mathcal{O}_2(V^3)=\mathcal{O}_3(V^3)=\{0\}$ which yields $(\widetilde{\mathbf{P}}_2^3\mathcal{O}_2)\mathbf{x}'_3=0$ and $(\widetilde{\mathbf{P}}_3^3\mathcal{O}_3)\mathbf{x}''_3=0$ for any $\mathbf{x}'_3$ and $\mathbf{x}''_3$ in $V^3$. Therefore, we have $\mathcal{O}_2^3=\mathcal{O}_3^3=0.$ By the definition of sparse observability, we have $\mathrm{ker}~\mathcal{O}_{\{2,3\}}^{3}=V^3$ and hence the subsystems corresponding to eigenvalue $3$ are not $2s-$sparse observable. Also, a similar analysis reveals that subsystems corresponding to eigenvalues $\lambda_1$ and $\lambda_2$ are both $2s-$sparse observable, hence $1\notin \mathcal{J}_1$ and $2\notin \mathcal{J}_1$.

Next we argue that $\lambda_2=2$ belongs to $\mathcal{J}_2$. To see why this is true, we first recall that $\mathbf{A}^{(2)}=\mathbf{P}_2\mathbf{A}\vert_{V^2}$, $\mathbf{I}_4^{(2)}=\mathbf{I}_4\vert_{V^2}$, $\mathbf{C}_i^2=\mathbf{C}_i\vert_{V^2}$, and then check that for sensor 1, the matrix: $$\begin{bmatrix}\mathbf{A}^{(2)}-2\mathbf{I}_4^{(2)} \\ \mathbf{C}_1^2 \end{bmatrix} = \left.\begin{bmatrix} -2 & \phantom{-}0 & \phantom{-}0 & \phantom{-}0 \\ \phantom{-}1 & -1 & -1 & \phantom{-}0 \\ -1 & -1 & -1 & \phantom{-}0 \\ \phantom{-}0 & \phantom{-}0 & \phantom{-}0 & -2 \\ \phantom{-}3 & \phantom{-}2 & \phantom{-}0 & \phantom{-}2
\end{bmatrix}\right\vert_{V^2},$$ defines an injective map. We also run the same check on sensor $2,3,$ and $4$ to conclude that eigenvalue $\lambda_2$ is observable by all $4$ sensors. Hence the subsystems corresponding to $\lambda_2$ are $2s-$eigenvalue observable. Proceeding in the same fashion we conclude that subsystems corresponding to eigenvalue $\lambda_1$ are not $2s-$eigenvalue observable. Therefore, the eigenvalue $\lambda_1=1$ belongs to $\mathcal{J}_3$.

In summary, the substates in $V^3$ cannot be securely reconstructed, the substates in $V^1$ can be securely reconstructed in the presence of at most $1$ attacked sensor, and the substates in $V^2$ can be securely reconstructed and the reconstruction can be done efficiently.

\section{COMPLEXITY OF CHECKING SPARSE OBSERVABILITY}
\label{sec:theorem2}
In the previous two sections, we studied the complexity of the SSR problem, and in particular, identified instances of the problem that can be solved in polynomial time. Recall that under at most $s$ sensor attacks on the system ~\eqref{eqn:modelsys}-\eqref{eqn:modelmeasure}, $2s$-sparse observability is necessary and sufficient for the SSR problem to yield a unique solution, namely the true initial state vector $\mathbf{x}(0)$. Given this result, we now take a step back and ask: what is the complexity of deciding whether a given system is $2s$-sparse observable? This question is highly relevant since it aims to identify the maximum number of sensor attacks that can be tolerated by a given system of the form \eqref{eqn:modelsys}-\eqref{eqn:modelmeasure}. In what follows, we show that determining the sparse-observability index (see Definition \ref{def:sparse_obs_index}) of a system is computationally hard; we will focus on the case of scalar-valued sensors throughout, as it suffices to establish the computational complexity of the problem.

\begin{problem} ($r$-sparse observability)\\ {\bf Input:} A matrix $\mathbf{A} \in \mathbb{Q}^{n\times n}$, a matrix $\mathbf{C} \in \mathbb{Q}^{p \times n}$ and a positive integer $r$.\\
{\bf Question:} Is the pair $(\mathbf{A}, \mathbf{C})$ $r$-sparse observable?
\end{problem}

Note that if the answer to an instance of the $r$-sparse observability problem is ``no'', then there is a simple proof: one can provide a set of $r$ rows of $\mathbf{C}$ that, if removed, result in a system that is no longer observable.  However, it is not clear whether there is a similarly simple proof for ``yes'' instances.  Thus, the $r$-sparse observability problem is in the class coNP.\footnote{See, e.g., \cite{cormen2009introduction} for additional details on the complexity classes NP and coNP.}

The complement of a decision problem is the problem obtained by switching the ``yes'' and ``no'' answers to all instances of that problem.  If a problem is in the class coNP, then its complement is in the class NP, and vice versa.

We will show that the $r$-sparse observability problem is coNP-hard by showing that its complement is NP-hard.  Specifically, we define the following complement problem to $r$-sparse observability.

\begin{problem} ($r$-sparse unobservability)\\
{\bf Input: } A matrix $\mathbf{A} \in \mathbb{Q}^{n\times n}$, a matrix $\mathbf{C} \in \mathbb{Q}^{p \times n}$ and a positive integer $r$.\\
{\bf Question:} Is there a set of $r$ rows that can be removed from $\mathbf{C}$ in order to yield a matrix $\bar{\mathbf{C}}$ such that $(\mathbf{A},\bar{\mathbf{C}})$ is unobservable?
\end{problem}
 
Note that the answer to an instance of $r$-sparse unobservability is ``yes'' if and only if the answer to the corresponding instance of $r$-sparse observability is ``no'' and vice versa. Further note that $r$-sparse unobservability is in the class NP.

We show that $r$-sparse unobservability is NP-complete by providing a reduction from the following {\it Linear Degeneracy} problem.  This problem was shown to be NP-complete in \cite{khachiyan1995complexity}.

\begin{problem} (Linear Degeneracy~\cite{khachiyan1995complexity})\\
{\bf Input:} A full column rank matrix $\mathbf{F} \in \mathbb{Q}^{p \times n}$.  \\
{\bf Question:} Does $\mathbf{F}$ contain a degenerate (i.e., noninvertible) $n \times n$ submatrix?
\end{problem}

In other words, the linear degeneracy problem asks whether it is possible to remove $p-n$ rows from matrix $\mathbf{F}$ so that the resulting (square) matrix is not full rank.  We are now ready to prove  the following result.

\begin{theorem}[\cite{Yanwen}]
The $r$-sparse unobservability problem is NP-complete.  Thus, the $r$-sparse observability problem is coNP-complete.
\label{thm:checkhard}
\end{theorem}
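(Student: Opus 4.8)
The plan is to prove NP-completeness of $r$-sparse unobservability directly; the coNP-completeness of $r$-sparse observability then follows immediately, since the latter is precisely the complement of the former and we have already observed it lies in coNP. Membership of $r$-sparse unobservability in NP is the easy direction: a ``yes'' certificate is the index set of the (at most) $r$ rows to delete from $\mathbf{C}$, and to verify it one forms the remaining matrix $\bar{\mathbf{C}}$, builds the corresponding observability matrix by stacking $\bar{\mathbf{C}}, \bar{\mathbf{C}}\mathbf{A}, \dots, \bar{\mathbf{C}}\mathbf{A}^{n-1}$, and checks whether its rank is strictly less than $n$. This is a rational matrix rank computation and runs in polynomial time.

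For NP-hardness I would give a polynomial-time many-one reduction from Linear Degeneracy. Given a full column rank matrix $\mathbf{F}\in\mathbb{Q}^{p\times n}$ (so $p\ge n$), construct the $r$-sparse unobservability instance with $\mathbf{A}=\mathbf{I}_n$, $\mathbf{C}=\mathbf{F}$, and $r=p-n$; this is clearly polynomial-time computable. The correctness rests on one simple structural fact: since $\mathbf{A}=\mathbf{I}_n$, the observability matrix of any pair $(\mathbf{I}_n,\bar{\mathbf{C}})$ consists of vertically stacked copies of $\bar{\mathbf{C}}$ and therefore has the same rank as $\bar{\mathbf{C}}$, so $(\mathbf{I}_n,\bar{\mathbf{C}})$ is unobservable exactly when $\mathrm{rank}\,\bar{\mathbf{C}}<n$. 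Now, if $\mathbf{F}$ has a degenerate $n\times n$ submatrix, that submatrix is obtained by deleting $p-n=r$ rows of $\mathbf{C}$ and is singular, hence the resulting pair is unobservable. Conversely, if deleting at most $r$ rows of $\mathbf{C}$ yields a matrix $\bar{\mathbf{C}}$ with $\mathrm{rank}\,\bar{\mathbf{C}}<n$, then $\bar{\mathbf{C}}$ still has at least $n$ rows and its row space has dimension less than $n$, so any $n$ of its rows are linearly dependent and form a degenerate $n\times n$ submatrix of $\mathbf{F}$. Thus the given instance of Linear Degeneracy is a ``yes'' instance if and only if the constructed instance of $r$-sparse unobservability is. Since Linear Degeneracy is NP-complete~\cite{khachiyan1995complexity}, $r$-sparse unobservability is NP-hard, and combined with the membership argument it is NP-complete; consequently $r$-sparse observability is coNP-complete.

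I do not anticipate a deep obstacle. The point that requires genuine care is the correctness argument for the reduction in both directions, and in particular handling the fact that Definition~\ref{def:sparse_obs_index} (hence the unobservability problem) allows removing \emph{at most} $r$ rows rather than exactly $r$: one must check that an unobservable pair produced by removing fewer than $p-n$ rows still certifies the existence of a degenerate $n\times n$ submatrix of $\mathbf{F}$, which is exactly the ``any $n$ linearly dependent rows'' argument above. One should also confirm that the verification step for NP membership is genuinely polynomial, i.e., that the bit-sizes arising in the rank computation of the observability matrix stay polynomially bounded.
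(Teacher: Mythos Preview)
Your proposal is correct and follows essentially the same approach as the paper: the identical reduction $\mathbf{A}=\mathbf{I}_n$, $\mathbf{C}=\mathbf{F}$, $r=p-n$ from Linear Degeneracy, together with the observation that observability of $(\mathbf{I}_n,\bar{\mathbf{C}})$ is equivalent to $\bar{\mathbf{C}}$ having full column rank. Your treatment is slightly more explicit about NP membership and about the ``at most $r$ versus exactly $r$'' issue (which is harmless here since deleting additional rows can only destroy observability), but the core argument is the same.
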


\begin{proof}
Given an instance of the linear degeneracy problem (with matrix $\mathbf{F} \in \mathbb{Q}^{p\times n}$), we construct an instance of the $r$-sparse unobservability problem as follows: set $\mathbf{A} = \mathbf{I}_n$, $\mathbf{C} = \mathbf{F}$, and $r = p-n$.

We now show that the answer to the constructed instance of $r$-sparse unobservability is ``yes'' if and only if the answer to the given instance of linear degeneracy is ``yes''.

First, suppose that the answer to the constructed instance of $r$-sparse unobservability is ``yes.''  Then there exists a set of $r$ rows of $\mathbf{C}$ that can be removed such that the remaining rows are not sufficient to yield observability.  However, since $\mathbf{A} = \mathbf{I}_n$, the above implies that there is a set of $r$ rows of $\mathbf{C}$ that can be removed such that the remaining rows are not full column rank.  Since $\mathbf{C} = \mathbf{F}$ and $r = p-n$, this means that there is an $n \times n$ submatrix of $\mathbf{F}$ that loses rank, and thus the answer to the linear degeneracy problem is ``yes.''

Next, we show that if the answer to the given instance of linear degeneracy is ``yes,'' then the answer to the constructed instance of $r$-sparse unobservability is ``yes.'' We will do this by showing the contrapositive: if the answer to the constructed instance of $r$-sparse unobservability is ``no'', then the answer to the given instance of linear degeneracy is ``no.''  Suppose the answer to the constructed instance of $r$-sparse unobservability is ``no.'' Then, by definition, the pair $(\mathbf{A},\mathbf{C})$ is observable even after removing any arbitrary $r$ rows from $\mathbf{C}$.  However, since $\mathbf{A} = \mathbf{I}_n$, in order for the system to remain observable after removing $r$ rows from $\mathbf{C}$, it must be the case that the remaining rows of $\mathbf{C}$ have full column rank.  Thus, if the answer to the constructed instance of $r$-sparse unobserability is ``no'', then $\mathbf{C}$ has full column rank after removing any arbitrary $r= p-n$ rows.  This means that every $n \times n$ submatrix of $\mathbf{C}$ is invertible.  Since $\mathbf{C} = \mathbf{F}$, the answer to the given instance of linear degeneracy is ``no'' (i.e., there is no $n \times n$ submatrix of $\mathbf{F}$ that is degenerate).

Thus, we have shown that the answer to the constructed instance of $r$-sparse unobservability is ``yes'' if and only if the answer to the given instance of linear degeneracy is ``yes". Since linear degeneracy is NP-complete, so is $r$-sparse unobservability.

Finally, since $r$-sparse observability is the complement of $r$-sparse unobservability, we have that $r$-sparse observability is coNP-complete.
\end{proof}

\begin{remark} 
In \cite{mitraAuto}, certain necessary conditions were presented for estimating the state of a plant despite attacks in a distributed setting, i.e., where measurements of the plant are dispersed over a network of sensors. Specifically, these conditions impose certain requirements on the observation model (in addition to requirements on the communication structure), the complexity of checking which was left open. Interestingly, Theorem \ref{thm:checkhard} resolves this question, and establishes that checking the necessary conditions in \cite{mitraAuto} is computationally hard; since the focus of our paper is on centralized systems, we do not present details of this result here. 
\end{remark}

\section{CONNECTIONS BETWEEN SPARSE OBSERVABILITY AND EIGENVALUE OBSERVABILITY}
\label{sec:connection}
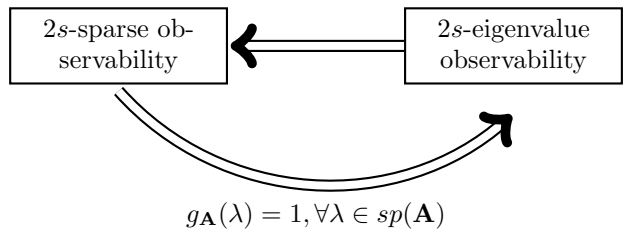
\begin{figure}[t]
\begin{center}
\begin{tikzpicture}
[->,shorten >=1pt,scale=.75,inner sep=1pt, minimum size=12pt, auto=center, node distance=3cm,
  thick, node/.style={circle, draw=black, thick},]
\tikzstyle{block1} = [rectangle, draw, fill=white, 
    text width=8em, text centered, minimum height=1cm, minimum width=1cm];
 \node [block1]  at (0,0) (c1) {$2s$-sparse observability};
\node  [block1]  at (7,0) (c2)
{$2s$-eigenvalue observability};
\draw[double,double distance=3pt, ->](c2) -- (c1);
\draw[double,double distance=3pt, ->](0,-0.8) to [bend right=45]  (7,-1.2);
\node at (3.5,-3) {$g_{\mathbf{A}}(\lambda)=1,\forall \lambda\in sp(\mathbf{A})$};
\end{tikzpicture}
\end{center}
\caption{Figure illustrating the hierarchy of relationships between different notions of observability.}
\label{fig:hierarchy}
\end{figure}
\label{sec:generalization}
In Sections \ref{sec: SSRhard} and \ref{sec:theorem2}, we showed that the SSR problem and the problem of determining the sparse observability index of a system are each computationally hard. At the same time, Section \ref{sec:Theorem1} gave us the positive result that certain instances of the SSR problem can be efficiently solved. In line with this finding, we are now motivated to ask: Can the sparse observability index of a system be computed in polynomial time for certain specific instances? In this section, we show that this is indeed the case by identifying instances of the problem where the notions of sparse observability and eigenvalue observability coincide. Given that the eigenvalue observability index of a system can always be computed in polynomial time based on simple rank tests, an equivalence between the two notions of observability immediately yields instances of the problem where the sparse observability index of the system can also be computed in polynomial time. With this in mind, in this section we will prove each of the implications indicated in Figure \ref{fig:hierarchy}. We begin with the following simple result.

\begin{proposition}[\cite{Yanwen}]
	Consider the linear system~\eqref{eqn:modelsys}-\eqref{eqn:modelmeasure}, and suppose its eigenvalue observability index is $2s$. Then, the pair $(\mathbf{A},\mathbf{C})$ is at least $2s$-sparse observable.
	\label{prop:implication1}
\end{proposition}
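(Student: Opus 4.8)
The plan is to show the contrapositive-free direction directly: assuming $2s$-eigenvalue observability, I want to prove that removing any $\mathcal{K}\subseteq\mathcal{V}$ with $|\mathcal{K}|\le 2s$ leaves $\ker\mathcal{O}_{\mathcal{V}\setminus\mathcal{K}}=\{0\}$. The natural tool is the PBH-type characterization together with the system decomposition from Section~\ref{sec:decompose}. By Theorem~\ref{thm:equiv2}, $2s$-eigenvalue observability of the full system is equivalent to $2s$-eigenvalue observability of each subsystem \eqref{eqn:decompose}, and by Theorem~\ref{thm:equiv1}, $2s$-sparse observability of the full system is equivalent to $2s$-sparse observability of each subsystem. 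So it suffices to prove the implication ``$2s$-eigenvalue observable $\Rightarrow$ $2s$-sparse observable'' for a single subsystem corresponding to one eigenvalue $\lambda_j$, i.e., for the pair $(\mathbf{A}^{(j)},\mathbf{C}^j)$ whose $\mathbf{A}^{(j)}$ has $\lambda_j$ as its only eigenvalue.

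First I would reduce to this single-eigenvalue case. Then, fix such a subsystem and any $\mathcal{K}$ with $|\mathcal{K}|\le 2s$; I must show $\ker\mathcal{O}^j_{\mathcal{V}\setminus\mathcal{K}}\vert_{V^j}=\{0\}$, equivalently that $(\mathbf{A}^{(j)},\mathbf{C}^j_{\mathcal{V}\setminus\mathcal{K}})$ is observable. Since each eigenvalue is observable by at least $2s+1$ sensors, the set $\mathcal{S}_{\lambda_j}$ of sensors observing $\lambda_j$ in the subsystem has $|\mathcal{S}_{\lambda_j}|\ge 2s+1$, so $\mathcal{S}_{\lambda_j}\setminus\mathcal{K}\ne\emptyset$: there is at least one surviving sensor $i^*$ that observes $\lambda_j$ in the subsystem. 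By the PBH test, $\begin{bmatrix}\mathbf{A}^{(j)}-\lambda_j\mathbf{I}_n^{(j)}\\ \mathbf{C}^j_{i^*}\end{bmatrix}$ being injective — and $\lambda_j$ being the only eigenvalue of $\mathbf{A}^{(j)}$, so the PBH rank condition at every other complex number is automatic — forces $(\mathbf{A}^{(j)},\mathbf{C}^j_{i^*})$ to be observable. Adding more rows (from the other surviving sensors) preserves observability, so $(\mathbf{A}^{(j)},\mathbf{C}^j_{\mathcal{V}\setminus\mathcal{K}})$ is observable, hence $\ker\mathcal{O}^j_{\mathcal{V}\setminus\mathcal{K}}\vert_{V^j}=\{0\}$. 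Since this holds for every $\mathcal{K}$ with $|\mathcal{K}|\le 2s$ and every $j$, each subsystem is $2s$-sparse observable, and Theorem~\ref{thm:equiv1} lifts this to the full system.

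The main obstacle — really the only subtle point — is making the PBH argument airtight in the subsystem: I need that in a subsystem whose state matrix $\mathbf{A}^{(j)}$ has the single eigenvalue $\lambda_j$, injectivity of $\begin{bmatrix}\mathbf{A}^{(j)}-\lambda_j\mathbf{I}_n^{(j)}\\ \mathbf{C}^j_{i}\end{bmatrix}$ is not merely necessary but sufficient for observability of $(\mathbf{A}^{(j)},\mathbf{C}^j_{i})$. This is exactly the content of the PBH observability test restricted to the unique eigenvalue; the rank conditions at all $\mu\ne\lambda_j$ hold trivially because $\mathbf{A}^{(j)}-\mu\mathbf{I}_n^{(j)}$ is already invertible on $V^j$. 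A cleaner alternative that avoids invoking Theorem~\ref{thm:equiv2} at all: work directly with the original pair, observe that for a surviving sensor $i^*\in\mathcal{S}_\lambda\setminus\mathcal{K}$ the matrix $\begin{bmatrix}\mathbf{A}-\lambda\mathbf{I}_n\\ \mathbf{C}_{i^*}\end{bmatrix}$ has full column rank for every $\lambda\in sp(\mathbf{A})$, hence $(\mathbf{A},\mathbf{C}_{i^*})$ is observable by PBH, hence $(\mathbf{A},\mathbf{C}_{\mathcal{V}\setminus\mathcal{K}})$ is observable, i.e., $\ker\mathcal{O}_{\mathcal{V}\setminus\mathcal{K}}=\{0\}$; since this is true for all $\mathcal{K}$ with $|\mathcal{K}|\le 2s$, the pair $(\mathbf{A},\mathbf{C})$ is $2s$-sparse observable. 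I would likely present this second, shorter route as the actual proof, mentioning the decomposition-based viewpoint only as motivation.
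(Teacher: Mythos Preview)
Your decomposition-based route is correct, though heavier than necessary. The ``cleaner alternative'' you say you would actually present, however, contains a genuine slip. You write that for a surviving sensor $i^*\in\mathcal{S}_\lambda\setminus\mathcal{K}$ the matrix $\begin{bmatrix}\mathbf{A}-\lambda\mathbf{I}_n\\ \mathbf{C}_{i^*}\end{bmatrix}$ has full column rank \emph{for every} $\lambda\in sp(\mathbf{A})$, and then conclude that $(\mathbf{A},\mathbf{C}_{i^*})$ is observable. But the sensor $i^*$ you selected depends on $\lambda$: a sensor that observes one eigenvalue need not observe the others, so there is no single $i^*$ for which the PBH test passes at every eigenvalue. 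Eigenvalue observability of index $2s$ does not imply any individual sensor is observable on its own.

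The fix is immediate and is exactly what the paper does: do not try to produce a single observable sensor. Instead, for each $\lambda\in sp(\mathbf{A})$ separately, pick some $i^*_\lambda\in\mathcal{S}_\lambda\setminus\mathcal{K}$ (nonempty since $|\mathcal{S}_\lambda|\ge 2s+1$ and $|\mathcal{K}|\le 2s$); then $\begin{bmatrix}\mathbf{A}-\lambda\mathbf{I}_n\\ \mathbf{C}_{i^*_\lambda}\end{bmatrix}$ is injective, and since $\mathbf{C}_{i^*_\lambda}$ is a subblock of $\mathbf{C}_{\mathcal{V}\setminus\mathcal{K}}$, the map $\begin{bmatrix}\mathbf{A}-\lambda\mathbf{I}_n\\ \mathbf{C}_{\mathcal{V}\setminus\mathcal{K}}\end{bmatrix}$ is injective as well. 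The PBH test therefore passes for $(\mathbf{A},\mathbf{C}_{\mathcal{V}\setminus\mathcal{K}})$ at every $\lambda$, so this pair is observable, and the system is $2s$-sparse observable. This is the paper's argument verbatim; your detour through Theorems~\ref{thm:equiv1} and~\ref{thm:equiv2} reaches the same conclusion but is unnecessary once the PBH test is applied correctly at the level of the collective $\mathbf{C}_{\mathcal{V}\setminus\mathcal{K}}$.
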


\begin{proof}
	Consider any subset of sensors $\mathcal{F}\subset\mathcal{V}$, such that $|\mathcal{F}|\leq 2s$. To establish that the pair $(\mathbf{A},\mathbf{C})$ is at least $2s$-sparse observable, we need to show that the pair $(\mathbf{A},\mathbf{C}_{\mathcal{V}\setminus\mathcal{F}})$ is observable. Based on the PBH test, this amounts to checking that each eigenvalue $\lambda\in sp(\mathbf{A})$ is observable w.r.t. the  observation matrix $\mathbf{C}_{\mathcal{V}\setminus\mathcal{F}}$. Let $S_{\lambda}$ represent the set of sensors w.r.t. which $\lambda$ is observable. A sufficient condition for this to happen is $|(\mathcal{V}\setminus\mathcal{F})\cap\mathcal{S}_{\lambda}|\geq1$, which is indeed true given that an eigenvalue observability index of $2s$ implies $|\mathcal{S}_{\lambda}|\geq (2f+1), \forall \lambda \in sp(\mathbf{A})$, and the fact that $|\mathcal{F}| \leq 2s$. 
\end{proof}

To see that the reverse implication does not hold in general, consider the following example.

\begin{example}
Consider an LTI system of the form \eqref{eqn:modelsys}-\eqref{eqn:modelmeasure} monitored by 6 sensors, with parameters as follows:
\begin{equation}
    \mathbf{A}=\begin{bmatrix}\lambda&0\\0&\lambda\end{bmatrix}, 
    \mathbf{C}_i=
    \begin{cases}
    \begin{bmatrix}1&0\end{bmatrix},& \text{if } i\in\{1,2,3\},\\ \\
    \begin{bmatrix}0&1\end{bmatrix},              & \text{if } i\in\{4,5,6\}.
\end{cases}
\label{eqn:counter}
\end{equation}
Here $\lambda\in\mathbb{R},|\lambda|\geq1$. Suppose $s=1$. Then, the removal of at most $2$ sensors will ensure that at least one sensor from each of the sets $\{1,2,3\}$ and $\{4,5,6\}$ remains unattacked; given the measurement model in \eqref{eqn:counter}, this is sufficient to preserve observability w.r.t. the remaining sensors. In other words, the system is 2-sparse observable. However, it is easy to verify that the eigenvalue $\lambda$ is not observable w.r.t. any sensor. 
\label{ex:example}
\end{example}

In view of Proposition \ref{prop:implication1} and Example \ref{ex:example}, we conclude that  $2s$-sparse observability of a system is in general less restrictive than the condition that the eigenvalue observability index of the system is $2s$. In what follows, we establish that the two aforementioned notions coincide when additional structure is imposed on the spectrum of $\mathbf{A}$. 

\begin{proposition}[\cite{Yanwen}] Consider the linear system model given by  \eqref{eqn:modelsys}-\eqref{eqn:modelmeasure}, and suppose $\lambda \in sp(\mathbf{A})$ has geometric multiplicity $1$. Consider any non-empty subset of sensors $\mathcal{S}=\{i_1,i_2,\ldots,i_{|\mathcal{S}|}\} \subseteq\mathcal{V}$. Then, the eigenvalue $\lambda$ is observable w.r.t. the pair $(\mathbf{A},\mathbf{C}_{\mathcal{S}})$ if and only if there exists a sensor $i_p\in\mathcal{S}$ such that $\lambda$ is observable w.r.t. sensor $i_p$, i.e., $\lambda$ is observable w.r.t. the pair $(\mathbf{A},\mathbf{C}_{i_p})$.
\label{prop:geommult1}
\end{proposition}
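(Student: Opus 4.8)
The plan is to collapse the whole statement to a one-line fact about a single eigenvector, using the Popov--Belevitch--Hautus (PBH) reading of eigenvalue observability together with the geometric-multiplicity hypothesis. First I would record the elementary reformulation: for \emph{any} observation matrix $\mathbf{C}_{\ast}$ obtained by vertically stacking a sub-collection of the $\mathbf{C}_i$'s, the map $\begin{bmatrix}\mathbf{A}-\lambda\mathbf{I}_n\\ \mathbf{C}_{\ast}\end{bmatrix}$ is injective if and only if $\ker(\mathbf{A}-\lambda\mathbf{I}_n)\cap\ker\mathbf{C}_{\ast}=\{0\}$. This is immediate, since a vector lies in the kernel of the stacked map precisely when it is annihilated by both blocks. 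Hence eigenvalue observability of $\lambda$ with respect to $(\mathbf{A},\mathbf{C}_{\ast})$ is equivalent to the eigenspace of $\lambda$ meeting $\ker\mathbf{C}_{\ast}$ only at the origin.

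Next I would invoke the hypothesis $\gamma(\lambda)=1$, which forces $\ker(\mathbf{A}-\lambda\mathbf{I}_n)=\mathrm{span}\{\mathbf{v}\}$ for a single eigenvector $\mathbf{v}\neq 0$. A one-dimensional subspace either lies entirely inside $\ker\mathbf{C}_{\ast}$ (exactly when $\mathbf{C}_{\ast}\mathbf{v}=0$) or meets it only at $0$ (exactly when $\mathbf{C}_{\ast}\mathbf{v}\neq 0$). Combining this with the previous paragraph yields the clean criterion: $\lambda$ is observable w.r.t. $(\mathbf{A},\mathbf{C}_{\ast})$ if and only if $\mathbf{C}_{\ast}\mathbf{v}\neq 0$.

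Finally I would apply this criterion twice. Taking $\mathbf{C}_{\ast}=\mathbf{C}_{\mathcal{S}}$, which is the vertical stack of $\mathbf{C}_{i_1},\dots,\mathbf{C}_{i_{|\mathcal{S}|}}$, we have $\mathbf{C}_{\mathcal{S}}\mathbf{v}\neq 0$ if and only if $\mathbf{C}_{i_p}\mathbf{v}\neq 0$ for at least one index $p$; applying the same criterion with $\mathbf{C}_{\ast}=\mathbf{C}_{i_p}$ identifies the latter condition with observability of $\lambda$ w.r.t. sensor $i_p$. Chaining the equivalences gives the proposition. There is essentially no obstacle here; the only point needing care is that the passage ``$\ker\mathbf{C}_{\ast}$ avoids the eigenspace $\iff$ some $\mathbf{C}_{i_p}$ does not annihilate $\mathbf{v}$'' genuinely relies on $\gamma(\lambda)=1$: when $\gamma(\lambda)\geq 2$ two sensors can each annihilate a different line inside a two-dimensional eigenspace while their stack annihilates nothing, so the ``only if'' direction fails (cf.\ the diagonal system in Example~\ref{ex:example}), and the geometric-multiplicity assumption cannot be dropped.
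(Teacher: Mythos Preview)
Your argument is correct and rests on the same core observation as the paper: under the PBH test, eigenvalue observability of $\lambda$ with respect to $(\mathbf{A},\mathbf{C}_{\ast})$ reduces to whether $\mathbf{C}_{\ast}$ annihilates the $\lambda$-eigenspace, and when $\gamma(\lambda)=1$ this is simply the condition $\mathbf{C}_{\ast}\mathbf{v}\neq 0$ for the single eigenvector $\mathbf{v}$. The paper reaches the same conclusion but routes it through a similarity transformation to Jordan form $\mathbf{J}$: with $\gamma(\lambda)=1$ there is a single Jordan block for $\lambda$, so the rank condition \eqref{eqn:cond1} holds iff the first column of the transformed $\bar{\mathbf{C}}_{\mathcal{S}}$ has a nonzero entry, which happens iff some $\bar{\mathbf{C}}_{i_p}$ has a nonzero entry there. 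Your version is more economical, since it works directly with the eigenvector and the kernel intersection $\ker(\mathbf{A}-\lambda\mathbf{I}_n)\cap\ker\mathbf{C}_{\ast}$, avoiding the coordinate change altogether; the paper's Jordan-form detour buys a concrete ``first column'' picture but is not needed for the logic. Your closing remark that the ``only if'' direction genuinely requires $\gamma(\lambda)=1$, with the pointer to Example~\ref{ex:example}, is also apt and matches the paper's surrounding discussion.
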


\begin{proof}
Consider a similarity transformation that maps $\mathbf{A}$ to its Jordan canonical form $\mathbf{J}$. Let this transformation map $\mathbf{C}_{\mathcal{S}}$ to $\bar{\mathbf{C}}_{\mathcal{S}}$, and ${\mathbf{C}}_{i_j}$ to $\bar{\mathbf{C}}_{i_j}$, for each $i_j\in\mathcal{S}$. Since $\lambda$ has geometric multiplicity $1$, there exists a single Jordan block corresponding to $\lambda$ in $\mathbf{J}$. Let this Jordan block be denoted $\mathbf{J}_{\lambda}$. Without loss of generality, suppose $\mathbf{J}$ is of the following form:
\begin{equation}
\mathbf{J}=\begin{bmatrix} \mathbf{J}_{\lambda}&\mathbf{0}\\ \mathbf{0}&\bar{\mathbf{J}}\end{bmatrix},
\label{eqn:jordanstruct}
\end{equation}
where $\bar{\mathbf{J}}$ is the collection of the Jordan blocks corresponding to eigenvalues in $sp(\mathbf{A})\backslash \{\lambda\}$. 
Based on the PBH test, $\lambda$ is observable w.r.t. the pair $(\mathbf{J},\bar{\mathbf{C}}_{\mathcal{S}})$ if and only if the following condition holds:
\begin{equation}
\label{eqn:cond1}
\textrm{rank}\begin{bmatrix}\mathbf{J}-\lambda\mathbf{I}_n\\ \bar{\mathbf{C}}_{\mathcal{S}}\end{bmatrix}=n.
\end{equation}
Given the structure of $\mathbf{J}$ in \eqref{eqn:jordanstruct}, and the fact that $\lambda$ has geometric multiplicity $1$, it is easy to see that \eqref{eqn:cond1} holds if and only if there is at least one non-zero entry in the first column of $\bar{\mathbf{C}}_{\mathcal{S}}$. However, the preceding condition holds if and only if there exists some sensor $i_p\in\mathcal{S}$ with at least one non-zero entry in the first column of $\bar{\mathbf{C}}_{i_p}$; the latter is precisely the condition for observability of $\lambda$ w.r.t. the sensor $i_p$, given that $g_{\mathbf{A}}(\lambda)=1.$ To complete the proof, it suffices to notice that a similarity transformation preserves the observability of an eigenvalue. 
\end{proof}

We now make use of the previous result to establish an equivalence between sparse observability and eigenvalue observability.
\begin{proposition}
\label{PropEquivalence}Consider the linear system model~\eqref{eqn:modelsys}-\eqref{eqn:modelmeasure}, and suppose every eigenvalue of $\mathbf{A}$ has geometric multiplicity $1$. Then, the pair $(\mathbf{A},\mathbf{C})$ is $2s$-sparse observable if and only if the eigenvalue observability of the system is $2s$.
\label{prop:connection}
\end{proposition}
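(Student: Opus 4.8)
The plan is to prove the two implications separately, since one of them is already available. The forward implication — if the eigenvalue observability index is $2s$ then $(\mathbf{A},\mathbf{C})$ is $2s$-sparse observable — is exactly Proposition~\ref{prop:implication1}, and it does not even use the geometric multiplicity hypothesis, so nothing new is required there. All the work goes into the converse: assuming $(\mathbf{A},\mathbf{C})$ is $2s$-sparse observable, show that every eigenvalue of $\mathbf{A}$ is observable with respect to at least $2s+1$ distinct sensors, i.e., $|\mathcal{S}_{\lambda}|\geq 2s+1$ for every $\lambda\in sp(\mathbf{A})$, which is what makes the eigenvalue observability index at least $2s$.

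I would argue the converse by contraposition. Suppose there is an eigenvalue $\lambda\in sp(\mathbf{A})$ with $|\mathcal{S}_{\lambda}|\leq 2s$. Choose a set $\mathcal{F}\subseteq\mathcal{V}$ with $\mathcal{S}_{\lambda}\subseteq\mathcal{F}$ and $|\mathcal{F}|\leq 2s$ (for instance take $\mathcal{F}=\mathcal{S}_{\lambda}$, padded with arbitrary sensors if one wants $|\mathcal{F}|$ to be exactly $2s$). Removing $\mathcal{F}$ leaves the sensor set $\mathcal{V}\setminus\mathcal{F}$, which by construction contains no sensor that observes $\lambda$. This is precisely where the hypothesis $g_{\mathbf{A}}(\lambda)=1$ enters: by Proposition~\ref{prop:geommult1}, $\lambda$ is observable w.r.t.\ the pair $(\mathbf{A},\mathbf{C}_{\mathcal{V}\setminus\mathcal{F}})$ if and only if some individual sensor in $\mathcal{V}\setminus\mathcal{F}$ observes $\lambda$; since none does, $\lambda$ fails the PBH test for $(\mathbf{A},\mathbf{C}_{\mathcal{V}\setminus\mathcal{F}})$, hence this pair is unobservable, i.e.\ $\mathrm{ker}~\mathcal{O}_{\mathcal{V}\setminus\mathcal{F}}\neq\{0\}$. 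Since $|\mathcal{F}|\leq 2s$, this contradicts $2s$-sparse observability (Definition~\ref{def:sparse_obs_index}). Therefore every eigenvalue is observable by at least $2s+1$ sensors, the eigenvalue observability index is at least $2s$, and combining this with Proposition~\ref{prop:implication1} fixes it at $2s$, closing the equivalence.

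The argument is short because Proposition~\ref{prop:geommult1} does the heavy lifting: under unitary geometric multiplicity each eigenvalue corresponds to a single Jordan block, so observability of that eigenvalue reduces to a single scalar condition (a nonzero entry in one column of the transformed observation matrix), and such a condition holds for a union of sensors exactly when it holds for one of them. The only points deserving care are the bookkeeping between the ``index is $2s$'' phrasing and the ``at least $2s+1$ sensors'' phrasing, and ensuring the removed set $\mathcal{F}$ genuinely has cardinality at most $2s$ so that the clash with Definition~\ref{def:sparse_obs_index} is legitimate. I do not expect a real obstacle beyond this; Example~\ref{ex:example} already shows the hypothesis cannot be dropped, so there is no hidden subtlety to chase in the general case.
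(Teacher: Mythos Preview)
Your proposal is correct and follows essentially the same approach as the paper: one direction is Proposition~\ref{prop:implication1}, and for the other you suppose some eigenvalue $\lambda$ has $|\mathcal{S}_\lambda|\le 2s$, remove a set $\mathcal{F}\supseteq\mathcal{S}_\lambda$ of at most $2s$ sensors, and invoke Proposition~\ref{prop:geommult1} to conclude that $(\mathbf{A},\mathbf{C}_{\mathcal{V}\setminus\mathcal{F}})$ fails the PBH test at $\lambda$, contradicting $2s$-sparse observability. The paper's proof is the same argument phrased as a direct contradiction (taking $\mathcal{F}=\mathcal{S}_\lambda$), with the same brief remark you make about pinning the index at exactly~$2s$.
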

\begin{proof}
For necessity, we proceed via contradiction. Suppose the pair $(\mathbf{A},\mathbf{C})$ is $2s$-sparse observable, but there exists some $\lambda \in sp(\mathbf{A})$ that is observable w.r.t. at most $2s$ distinct sensors. Recall that the set of sensors w.r.t. which $\lambda$ is observable is denoted $\mathcal{S}_{\lambda}$. Based on our hypothesis, $|\mathcal{S}_{\lambda}| \leq 2s$. Suppose $|\mathcal{S}_{\lambda}|=2s$ (since an identical argument can be sketched when $|\mathcal{S}_{\lambda}| < 2s$). Since  $(\mathbf{A},\mathbf{C})$ is $2s$-sparse observable, the pair $(\mathbf{A},\mathbf{C}_{\mathcal{V}\setminus\mathcal{S}_{\lambda}})$ is observable. However, based on Proposition  \ref{prop:geommult1}, this requires $\lambda$ to be observable w.r.t. at least one sensor in $\mathcal{V}\setminus\mathcal{S}_{\lambda}$, leading to the desired contradiction. This completes the proof of necessity. For sufficiency, note from Proposition  \ref{prop:implication1} that the pair $(\mathbf{A},\mathbf{C})$ is at least $2s$-sparse observable whenever its eigenvalue observability index is $2s$; the fact that the observability index is no more than $2s$ follows from the additional assumption on the geometric multiplicity of eigenvalues, and arguments similar to those used for establishing necessity. 
\end{proof}

It directly follows from the definition of eigenvalue observability that the eigenvalue observability index of a system can be computed in polynomial time. Hence, we have the following corollaries of Proposition~\ref{PropEquivalence}.

\begin{corollary}
\label{cor1}
When all the eigenvalues of the matrix $\mathbf{A}$ have geometric multiplicity 1, the sparse observability index of the system can be computed in polynomial time.
\end{corollary}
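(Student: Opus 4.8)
The plan is to reduce the computation of the sparse observability index to the computation of the eigenvalue observability index, which amounts to a bounded number of rank tests and is therefore polynomial-time. First I would record that the equivalence in Proposition~\ref{PropEquivalence} is not special to even thresholds: under the geometric-multiplicity-one hypothesis, the pair $(\mathbf{A},\mathbf{C})$ is $k$-sparse observable if and only if every $\lambda\in sp(\mathbf{A})$ is observable with respect to at least $k+1$ distinct sensors. This follows exactly as in the proof of Proposition~\ref{PropEquivalence}, the point being that the only ingredient invoking $g_{\mathbf{A}}(\lambda)=1$ is Proposition~\ref{prop:geommult1}, whose statement has no parity restriction: $k$-sparse observability means $(\mathbf{A},\mathbf{C}_{\mathcal{V}\setminus\mathcal{K}})$ is observable for every $\mathcal{K}$ with $|\mathcal{K}|\le k$; by the PBH test this holds iff every $\lambda\in sp(\mathbf{A})$ is observable w.r.t.\ $(\mathbf{A},\mathbf{C}_{\mathcal{V}\setminus\mathcal{K}})$; and by Proposition~\ref{prop:geommult1} that is equivalent to $|\mathcal{S}_\lambda\setminus\mathcal{K}|\ge 1$ for all such $\mathcal{K}$, i.e.\ to $|\mathcal{S}_\lambda|\ge k+1$.

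From this I would conclude that the sparse observability index equals $\min_{\lambda\in sp(\mathbf{A})}|\mathcal{S}_\lambda|-1$, i.e.\ it coincides with the eigenvalue observability index. The lower bound is the forward direction of the displayed equivalence (and, when the minimum is even, is exactly Proposition~\ref{prop:implication1}). The matching upper bound is witnessed by removing the set $\mathcal{S}_{\lambda^\ast}$ for an eigenvalue $\lambda^\ast$ attaining the minimum: after this removal no remaining sensor observes $\lambda^\ast$, so by Proposition~\ref{prop:geommult1} the eigenvalue $\lambda^\ast$ is unobservable w.r.t.\ $(\mathbf{A},\mathbf{C}_{\mathcal{V}\setminus\mathcal{S}_{\lambda^\ast}})$, hence $\ker\mathcal{O}_{\mathcal{V}\setminus\mathcal{S}_{\lambda^\ast}}\ne\{0\}$, which shows the index is at most $|\mathcal{S}_{\lambda^\ast}|-1$.

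Finally I would exhibit the polynomial-time procedure for $\min_\lambda|\mathcal{S}_\lambda|$: compute $sp(\mathbf{A})$ (the paper already charges $O(n^3)$ for obtaining the Jordan form); then for each of the at most $n$ eigenvalues $\lambda$ and each of the $N$ sensors $i$, test via Gaussian elimination whether $\begin{bmatrix}\mathbf{A}-\lambda\mathbf{I}_n\\ \mathbf{C}_i\end{bmatrix}$ has full column rank, thereby deciding membership of $i$ in $\mathcal{S}_\lambda$; count the resulting cardinalities, take the minimum over $\lambda$, and subtract one. There are at most $nN$ rank tests, each polynomial in $n$ and the $p_i$, so the overall cost is polynomial, which together with the identity of the two indices proves the corollary. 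The main (and only) obstacle is the mild mismatch between the even-threshold phrasing of Proposition~\ref{PropEquivalence} and the need to handle an index of arbitrary parity; as noted above, it is dispatched by appealing directly to Proposition~\ref{prop:geommult1}, which already supplies the equivalence for every $k$. (Depending on the computational model one may additionally wish to remark that the eigenvalues of a rational $\mathbf{A}$ need not be rational, but this is already implicit in the paper's use of the Jordan form and does not affect the polynomial-time claim.)
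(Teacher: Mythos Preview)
Your proposal is correct and follows essentially the same route as the paper, which simply states that the corollary follows from Proposition~\ref{PropEquivalence} together with the fact that the eigenvalue observability index is computable in polynomial time via rank tests. You are in fact more careful than the paper on one point: Proposition~\ref{PropEquivalence} is phrased for thresholds of the form $2s$, and you correctly observe (and justify via Proposition~\ref{prop:geommult1}) that the equivalence extends to arbitrary $k$, so that the two indices coincide exactly---a detail the paper leaves implicit.
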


\begin{corollary}
\label{cor2}
For a 2s-sparse observable system~\eqref{eqn:modelsys}-\eqref{eqn:modelmeasure}, when all the eigenvalues of the matrix $\mathbf{A}$ have geometric multiplicity 1, the SSR problem can be solved in polynomial time.
\end{corollary}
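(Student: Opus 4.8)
The plan is to obtain this statement as a direct corollary by chaining together two results already proved: the equivalence between sparse observability and eigenvalue observability under the unitary geometric multiplicity hypothesis (Proposition~\ref{PropEquivalence}), and the polynomial-time solvability of the SSR problem under an eigenvalue observability assumption (Corollary~\ref{the:poly}).

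First I would invoke Proposition~\ref{PropEquivalence}. By hypothesis, every eigenvalue of $\mathbf{A}$ has geometric multiplicity $1$ and the system~\eqref{eqn:modelsys}-\eqref{eqn:modelmeasure} is $2s$-sparse observable. Proposition~\ref{PropEquivalence} then applies verbatim and shows that the eigenvalue observability index of the system is $2s$; in particular, it is at least $2s$. This is the only place where the geometric multiplicity assumption is used, and it is exactly what upgrades the (a priori weaker) sparse observability condition into the eigenvalue observability condition needed below.

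Next, with the eigenvalue observability index at least $2s$ in hand and at most $s$ sensors attacked, Corollary~\ref{the:poly} immediately yields that the SSR problem can be solved in polynomial time. For concreteness I would also recall the underlying algorithm: apply the decomposition of Section~\ref{sec:decompose}, which by Proposition~\ref{prop:com} costs $O(pn^3)$; by Theorem~\ref{thm:equiv2} each subsystem~\eqref{eqn:decompose} inherits $2s$-eigenvalue observability, so by the PBH argument every substate $\mathbf{x}_j$ is reconstructible from the measurements of any single sensor in $\mathcal{S}_{\lambda_j}$; since $|\mathcal{S}_{\lambda_j}|\geq 2s+1$, majority voting over the reconstructed copies recovers each $\mathbf{x}_j$ correctly despite at most $s$ corruptions, and summing the $\mathbf{x}_j$ returns the state. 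Every step runs in time polynomial in $n$ and $p$, so the claim follows.

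There is no genuine obstacle here, since the statement is a corollary of prior results; the only point requiring care is the bookkeeping of the observability indices, namely verifying that "$2s$-sparse observable" together with $g_{\mathbf{A}}(\lambda)=1$ for all $\lambda\in sp(\mathbf{A})$ forces the eigenvalue observability index to be at least $2s$ so that the hypothesis of Corollary~\ref{the:poly} is met. That implication is precisely the content of Proposition~\ref{PropEquivalence}, so the proof reduces to citing it followed by Corollary~\ref{the:poly}.
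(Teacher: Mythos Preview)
Your proposal is correct and follows essentially the same route as the paper: invoke Proposition~\ref{PropEquivalence} to upgrade $2s$-sparse observability to $2s$-eigenvalue observability under the unitary geometric multiplicity hypothesis, then apply Corollary~\ref{the:poly} to conclude polynomial-time solvability. The extra paragraph spelling out the decomposition and majority-voting algorithm is accurate but goes beyond what the paper's own two-line proof records.
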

\begin{proof}
It is shown in Proposition \ref{prop:connection} that under the unitary geometric multiplicity assumption, a $2s$-sparse observable system is also $2s$-eigenvalue observable. Thus, such a system satisfies the hypotheses in the statement of Theorem \ref{the:poly}, and we immediately obtain the existence of a polynomial-time solution for the SSR problem.
\end{proof}

\section{CONCLUSION}
\label{sec:conclusion}

In this paper, we showed that when the eigenvalues of the system matrix $\mathbf{A}$ have unitary geometric multiplicity, the SSR problem is tractable since both checking the sparse observability (see Corollary~\ref{cor1}) as well as solving the SSR problem (see Theorem~\ref{the:poly}) can be performed in polynomial time. When at least one of the eigenvalues has geometric multiplicity greater than one, we can still compute the eigenvalue observability index and, if it is at least $2s$, solve the SSR problem in polynomial time if at most $s$ sensors are attacked. However, in this case, eigenvalue observability is no longer necessary for the SSR problem to be solvable. Since even checking sparse observability is coNP-complete, we conjecture that the SSR problem may be intractable in this case. The authors are currently investigating this conjecture. However, even in this case, the computational complexity of solving the SSR problem can be reduced, when the system matrix $\mathbf{A}$ has at least 2 distinct eigenvalues.

\section{ACKNOWLEDGMENTS}
Shreyas Sundaram thanks Lintao Ye for helpful discussions pertaining to the Linear Degeneracy problem.

\bibliographystyle{unsrt}
\bibliography{refs}

\end{document}